
\documentclass{easychair}

\usepackage{doc}


%


\usepackage{soul}
\usepackage{url}
\usepackage[utf8]{inputenc}
\usepackage[small]{caption}
\usepackage{graphicx}
\usepackage{amsmath}
\usepackage{amsthm}
\usepackage{booktabs}
\usepackage[switch]{lineno}


\urlstyle{same}


\newtheorem{example}{Example}
\newtheorem{theorem}{Theorem}






\usepackage{soul}
\usepackage{url}
\usepackage[utf8]{inputenc}
 \usepackage{graphicx}
\usepackage{amsmath}
\usepackage{amsthm}
\usepackage{amssymb}
\usepackage{amsfonts}
\usepackage{booktabs}
\usepackage{paralist}
\usepackage{stmaryrd}
\usepackage{thmtools}
\urlstyle{same}

\usepackage{turnstile}

\usepackage{float}
\restylefloat{table}


\usepackage{mathtools}
\usepackage{cuted}


\usepackage{tikz-cd}
\usepackage{float}
\usepackage{graphicx}
\usepackage{xspace}


\restylefloat{table}

\usepackage{tikz}

\usetikzlibrary{matrix,arrows.meta}

\usetikzlibrary{calc,arrows,shapes}
\usetikzlibrary{trees,fit,shapes,calc}

\tikzstyle{mybox} = [rectangle,draw,minimum height=1cm, minimum width=6em,fill=blue!20]
\tikzstyle{decision} = [diamond, draw,
text width=4.5em, text badly centered, node distance=4cm, inner sep=0pt]
\tikzstyle{block} = [rectangle, draw,
text width=5em, text centered, rounded corners, node distance=4cm, minimum height=4em]
\tikzstyle{line} = [draw, -latex']
\tikzstyle{cloud} = [draw, ellipse, node distance=4cm,
minimum height=2em]

\usepackage{mathtools}

\usepackage{cuted}
\setlength\stripsep{3pt plus 1pt minus 1pt}


\usetikzlibrary{arrows.meta,shapes.misc,patterns,shapes.symbols,decorations.pathreplacing,decorations}

\DeclareMathSymbol{\mlq}{\mathord}{operators}{``}
\DeclareMathSymbol{\mrq}{\mathord}{operators}{`'}

\DeclareMathOperator{\dom}{dom}


\usepackage{tikz}
\usetikzlibrary{trees}

\tikzset{My Style/.style={blue, draw=black,fill=pink, minimum size=0.5cm}}
\colorlet{shadecolor}{gray!40}
\tikzset{My Style/.style={black, draw=white,fill=shadecolor, minimum size=0.1cm}} 

\newcommand{\ignore}[1]{} 


\newcommand{\defin}[1]{\left\{\begin{array}{l} #1 \end{array}\right\}}

\newcommand{\rul}{\leftarrowtail}

\newtheorem{corollary}{Corollary}
\newtheorem{proposition}{Proposition}

\theoremstyle{definition}
\newtheorem{definition}{Definition}
\newtheorem{notation}{Notation}


\ignore{

}



\newcommand{\vu}{\bar{u}}                


\newcommand{\bx}{\bar{x}}                


\newcommand{\bP}{\bar{P}}

\newcommand{\ba}{\bar{a}} 
\newcommand{\bb}{\bar{b}}

\newcommand{\eq}{=}

\newcommand{\PTIME}{P-time\xspace}

\newcommand{\op}{\textbf{op}}
\newcommand{\e}{\varepsilon}

\newcommand{\me}{\module(\e)}

\newcommand{\DomH}{{\Un^+}}
\newcommand{\DomTH}{{\Un^+}}

\newcommand{\Pa}{\cP_{\term}}

\newcommand{\GuessP}{\textit{GuessP}}
\newcommand{\GuessNewP}{\textit{GuessNewP}}
\newcommand{\GuessNewPair}{\textit{GuessNewPair}}
\newcommand{\Copy}{\textit{Copy}}
\newcommand{\Reach}{\textit{Reach}}
\newcommand{\Pick}{\textit{Pick}}
\newcommand{\PickPQ}{\textit{PickPQ}}
\newcommand{\Eq}{\textit{Eq}}
\newcommand{\Elim}{\textit{Elim}}
\newcommand{\Root}{\textit{Root}}

\newcommand{\BG}{\textbf{\tiny{ BG}}}

\newcommand{\partialmap}{\rightharpoonup}

\newcommand{\Lo}{\mathbb{L}}

\newcommand{\Dom}{{\rm Dom}}

\newcommand{\id}{{\rm id}}

\newcommand{\Un}{{\bf U}}

\newcommand{\mT}{{\top}}  

\newcommand{\rneg}{\mathord{\curvearrowright}}


\ignore{
\newcommand\turnstile[3][2]{%
	\mathrel{\calcturnstile{#1}%
		\setbox0=\hbox{$\tsvstyle\vdash$}%
		\setbox2=\hbox{$\vcenter{\copy0}$}%
		\hbox{\vrule\vphantom{$\tsvstyle\vdash$}}%
		\raise\dimexpr\ht0-\ht2\relax\hbox{$\vcenter{\offinterlineskip
				\ialign{\hfil\kern1pt$\tsstyle##\vphantom{by}$\kern1pt\hfil\cr
					\relax\if\relax\detokenize{#2}\relax\hphantom{\vdash}\kern-2pt\else#2\fi\cr
					\noalign{\kern1pt\hrule\kern1pt}%
					\relax\if\relax\detokenize{#2}\relax\hphantom{\vdash}\kern-2pt\else#3\fi\cr}}$}%
}}
\newcommand\Turnstile[3][2]{%
	\mathrel{\calcturnstile{#1}%
		\vcenter{\hbox{\vrule\vphantom{$\tsvstyle\models$}}}%
		\vcenter{\offinterlineskip
			\ialign{\hfil\kern1pt$\tsstyle##\vphantom{by}$\hfil\cr
				\relax\if\relax\detokenize{#2}\relax\hphantom{=}\kern-2pt\else#2\fi\cr
				\noalign{\kern1pt\hrule\kern\fontdimen22\tsfont2\hrule\kern1pt}%
				\relax\if\relax\detokenize{#2}\relax\hphantom{-}\kern-2pt\else#3\fi\cr}}
}}
\newcommand\calcturnstile[1]{%
	\ifcase#1\let\tsstyle\textstyle\let\tsvstyle\textstyle\let\tsfont\textfont\or
	\let\tsstyle\textstyle\let\tsvstyle\textstyle\let\tsfont\textfont\or
	\let\tsstyle\scriptstyle\let\tsvstyle\textstyle\let\tsfont\scriptfont\or
	\let\tsstyle\scriptscriptstyle\let\tsvstyle\scriptstyle\let\tsfont\scriptscriptfont\fi}
} 



\ignore{
 \newcommand*\Subsststile[2]{%
\ \	\,\scalebox{0.8}[0.5]{$\sststile[ss]{\textstyle#1}{\textstyle#2}$}\, \ \
}
\newcommand{xves}{\Subsststile{}{\ \ \ \ch \ \ \ }}
}

\newcommand{\sststiles}[3][s]{\turnstile[#1]{s}{s}{#2}{#3}{s}}

 \newcommand*\mysststile[2]{%
	\ \	\,\scalebox{0.8}[0.5]{$\sststile[ss]{\textstyle#1}{\textstyle#2}$}\, \ \
}
\newcommand*\mydoublesststile[2]{%
	\ \	\,\scalebox{0.8}[0.5]{$\sststiles[ss]{\textstyle#1}{\textstyle#2}$}\, \ \
}

\newcommand{\provesxx}{\mysststile{}{\ \ \ x\ \ \ }}

\newcommand{\provesxy}{\mysststile{}{\ \ \ xy\ \ \ }}

\newcommand{\doubleprovesxy}{\mydoublesststile{}{\ \ \ \ xy \ \ \ }}

\newcommand{\doubleprovesxyzw}{\mydoublesststile{}{\ \  \ xyzw\ \ }}

\newcommand{\provesxyzw}{\mysststile{}{\  xyzw \ }}

\newcommand{\iter}{\uparrow}
\newcommand{\comp}{\ensuremath \mathbin{;}}

\newcommand{\dn}{\!\downarrow}
\newcommand{\notdn}{\!\not\downarrow}





\newcommand{\sstring}{s}

\newcommand{\nat}{\mathbb{N}}  



\newcommand{\strA}{\fA}
\newcommand{\strB}{\fB}         
\newcommand{\strC}{\fC}  
  



\newcommand{\CH}{{\sf CH} }

\newcommand{\ChFunctions}{\CH(\Un,\Modules)}



\newcommand{\cP}{\mathcal{P}}

\newcommand{\cF}{\mathcal{F}}

\newcommand{\Mod}{\mathcal{M}}

\newcommand{\cH}{\mathcal{H}}

\newcommand{\1}{{\bf 1}}

\newcommand{\fA}{\mathfrak{A}}
\newcommand{\fB}{\mathfrak{B}}

\newcommand{\fC}{\mathfrak{C}}


\newcommand{\cK}{\mathcal{K}}

\newcommand{\Tree}{{\cal T}}
\newcommand{\Tr}{\textbf{Tr}}

\newcommand{\llast}{\mathit{last}}
\newcommand{\ffirst}{\mathit{first}}

\newcommand{\Domain}{\mathcal{D}}
\newcommand{\Last}{\mT}

\newcommand{\emp}{\mathbf{e}}



\newcommand{\withconverse}[1]{}

\newcommand{\term}{t}
\newcommand{\termg}{g}
















\newcommand{\ch}{h}
\newcommand{\che}{\bar{h}}

\newcommand{\semaNoh}[1]{\textbf{Tr}\llbracket{#1}\rrbracket}
\newcommand{\sem}[1]{\textbf{Tr}\llbracket{#1}\rrbracket}

\newcommand{\module}{m}
\newcommand{\Modules}{\Mod}

\newcommand{\Terms}{\it Terms}











\newcommand{\ioeq}{\fallingdotseq}
\newcommand{\strongeq}{\doteq}
\newcommand{\traceeq}{\doteq}
%

\newcommand{\Proj}{{\sf Pr}} 








\makeatletter
\tikzset{
	dot diameter/.store in=\dot@diameter,
	dot diameter=3pt,
	dot spacing/.store in=\dot@spacing,
	dot spacing=10pt,
	dots/.style={
		line width=#1,
		line cap=round,
		dash pattern=on 0pt off \dot@spacing
	},
	mydots/.style 2 args={
		dot spacing=#2,
		preaction={draw=black,
			dots=#1},
		draw = white,
		dots=.8*#1,
		postaction={draw=black,
			dots=.333*#1},
	}
}
\makeatother

\title{Promise Algebra:\\ An Algebraic Model of Non-Deterministic
	 Computations}

\author{
Eugenia Ternovska
}

\institute{
	Simon Fraser University\\
	\email{ter@sfu.ca}
}

\authorrunning{Eugenia Ternovska}

\titlerunning{Promise Algebra}

\begin{document}

\maketitle

\begin{abstract}
Our goal is to define an algebraic language for reasoning about non-deterministic computations. Towards this goal, 
we introduce an algebra of string-to-string transductions. 
Specifically, it is an algebra of partial functions on words over the alphabet of relational $\tau$-structures over the same domain.
The algebra has a two-level syntax, and thus, two parameters to control its expressive power. The top level defines algebraic expressions, and the bottom level specifies atomic transitions. 
History-dependent  Choice functions resolve atomic non-determinism, and make general relations functional. Equivalence classes  of such functions  serve as certificates for computational problems specified by algebraic terms.
The algebra has an equivalent syntax in the form of a Dynamic Logic, where terms describing computational processes or programs appear inside the modalities. 

We define a  simple secondary logic for representing atomic transitions, which is  a modification of conjunctive queries. With this logic, the algebra can represent both reachability and counting examples, which is not possible in Datalog.  
We analyze  the data complexity of this logic, measured in the size of the input structure, and show that a restricted fragment of the logic captures the complexity class NP. 

The logic can be viewed as a database query language, where 
atomic propagations are separated from control.

\end{abstract}

\vspace{1ex}

\noindent\textbf{Organization}  
We start by definining the syntax and semantics of the algebra.
In {Section  \ref{sec:Algebra}}, we present the syntax of the algebra, and in Section \ref{sec:Semantics}, its semantics. In {Section  \ref{sec:Dynamic-Logic}}, we  reformulate  the algebra as a modal  Dynamic  Logic. The logic allows for complex nested  tests, and has an iterator construct.  The main programming constructs are definable in the logic.
The formulae have a free function variable  over Choice functions, which gives us an implicit existential quantifier  over Choice functions. 
Due to the presence of a form of negation, an implicit universal quantification  over such functions is also present. Moreover, in complex nested tests, such implicit (existential, universal) quantifiers can alternate.

In Section \ref{sec:main-task}, we define our main computational task, formulated in the Dynamic Logic.
We then define the notion of a computational problem specified by an algebraic term.  A certificate for such a problem is an equivalence class of Choice functions, also called a witness or a \emph{promise}. We show that a Boolean algebra of promises is embedded into the Dynamic Logic, and its underlying set has a forest structure, where the forest has a tree for each term.  In Section \ref{sec:Strong-Equalities}, we show that the truth of certain (conditional) equalities between algebraic terms can be used to indicate the existence, or non-existence, of a ``yes'' certificate for a problem specified by an algebraic term.
In Section \ref{sec:Atomic-Modules}, we explain how a secondary logic can be defined. In particular, we formulate 
the Law of Inertia, and define a specific logic for specifying atomic transitions. The logic is based on a modification of unary conjunctive queries. 
We give examples in  {Section \ref{sec:Examples}}, and study the complexity of query evaluation in 
{Section \ref{sec:Complexity}}.  
We conclude, in {Section \ref{sec:Conclusion}}, with a summary and future research directions. 
Related work is mentioned throughout the paper.

\vspace{1ex}

We assume familiarity with the basic notions of first-order (FO) and second-order (SO) logic  (see, e.g., \cite{Enderton}), and use `$:=$' to mean ``denotes'' or ``is by definition''.

\newpage
\section{Algebra: Syntax}
\label{sec:Algebra}

In this section, we define the syntax of our algebra, and introduce some initial intuitions regarding the meaning of its operations. The purpose of the algebra is to talk about Choice functions of a certain kind. The Choice functions (or, rather, their equivalence classes) will later  be considered as certificates for computational problems specified by algebraic terms, or \emph{promises}. If such a promise is given for a term (intuitively, a non-deterministic program), the computation is guaranteed to proceed successfully. With this intuition in mind, we call this algebra a \emph{Promise Algebra}. A promise is like an elephant in a room -- it is there, but is never mentioned explicitly.\footnote{We use a free function \emph{variable} $\e$ to denote the presence of this elephant, but no concrete certificate is present in the language.}

Formally, the algebra is an algebra of (functional) binary relations on strings of relational structures.  It has a two-level syntax. The top level (defined in this section) specifies algebraic expressions. The algebraic operations are \emph{dynamic counterparts }of classical logic connectives --  negation, conjunction, disjunction -- and iteration (the Kleene star, or reflexive transitive closure). Unlike classical connectives, these operations are \emph{function-preserving} in the sense that if atomic elements are functional binary relations, then so are all algebraic expressions built up from them. The bottom level of the formalism specifies atomic expressions (intuitively, actions) in a separate logic (explained later). 

We now remind the reader the notion of a (relational) structure.
Let  $\tau$ be a relational vocabulary, which is finite (but is of an unlimited size).  
Let $\tau \ := \ \{S_1, \dots , S_n\}$,  each $S_i$ has an associated arity $r_i$,  and $A$ be a non-empty set.
A \emph{$\tau$-structure} $\strA$ over  domain $\dom(\strA)\  : = \ A$  is 
$
\strA \ := \ (A;\  S_1^{\strA}, \dots , S_n^{\strA} ),
$
where $S_i^{\strA}$ is an $r_i$-ary relation  called the \emph{interpretation} of  $S_i$.
In this paper, all structures are finite.
If $\strA$ is a $\tau$-structure,  $\strA|_{\sigma}$ is
its restriction to a  sub-vocabulary $\sigma$. 
We now fix a relational vocabulary  $\tau$, and assume it is partitioned into ``inputs'' (or EDB relations, in the database terminology), and unary ``register'' symbols:\footnote{EDB is common term in Database theory. It stands for Extensional Database, which is a relational structure.}   \begin{equation}\label{eq:tau}
	\tau \ := \ \tau_{\rm EDB}   \uplus 
	\tau_{\rm reg}.
\end{equation}
The details and the formal requirements on these two vocabularies will become clear when a particular logic of the bottom  level is explained later in the paper. We only mention now that, intuitively, the interpretations of EDB (or input) relations never change, while the interpretations of 
the registers are updated by applications of atomic modules.

Let a set $\Modules$ of atomic module symbols, denoting non-deterministic atomic actions,  be fixed. Intuitively, the actions are  atomic updates of relational structures. The atomic updates are combined into complex updates (formally, algebraic expressions, or terms) using algebraic operations.
The set $\Terms$  of the well-formed terms of the algebra are defined as:
\begin{equation}	\label{eq:algebra}  
	\begin{array}{c}	
	\hspace{-3ex}	\hspace{-1ex}	\term  :: =   
		\me \mid	\id   \mid    \rneg\term   \mid  
		\term\comp \term  \mid 
		\term  \sqcup \term  \mid    \term ^\iter  
		\mid\! P= Q\!	\mid \!  \BG (P_{now}\neq\! Q),
	\end{array}
\end{equation}
where $\module \in \Modules$ and $\e$ is a free function variable ranging over Choice functions. The syntax allows  only \emph{one} such (free) variable per an algebraic expression.\footnote{Syntactically, the variable $\e$, that occurs free in the algebraic expressions, is not really necessary. We use it to emphasize the \emph{existence} of a certificate -- a concrete Choice function. It will be convenient when we formalize a computational problem specified by an algebraic term.  }
 of all relational $\tau$-structures over the same finite domain.   
We require that $P,Q \in   \tau_{\rm reg}$. The subscript ``$now$'' in  $P_{now}$ is not a part of the syntax. It is added for the reader to memorize that the content of the ``register'' $P$   in the current  position must be different from the values  in $Q$ ever before  (denoted `$\BG$', for Back Globally).

	In the table, we list the variables first, and then the  partial functions of the algebra that are split into three groups -- nullary (constant), unary and binary partial mappings. Unary functions take one other function as an argument, binary-- two. The column in the middle represent the range of the variable or the type of the corresponding partial function.  Our notations are:  $\Un$ is the set of all $\tau$-structures over the same domain,
	$\ChFunctions$ is a set of Choice functions, to be defined later in the paper,
and	$\cF$ denotes partial functions on $\DomH$ (non-empty strings over the alphabet $\Un$).
 
\vspace{1ex}

\begin{center}
\begin{small}
\hspace{-3ex}\begin{tabular}{c|c|c}
		$x$& $\DomH$ & string variable\\
			$\e$&  $\ChFunctions$  & Choice function variable\\
				$f,g$& $\cF$  & partial function variables on  $\DomH$ \\
		\hline 
	\hline 
		  $\id(x)$&&Identity on $\DomH$\\
	$\module(\ch/\e)(x)$&& Module, for all $\module\in \Modules$, $\ch\in\cH$ \\
	$\BG(P\neq Q) (x)$&$ \cF$&Back Globally Non-Equal  \\
	$(P= Q)(x)$&& Equal  \\
	\hline 
	$\rneg f(x)$&$\cF\to\cF$&Anti-Domain (Unary Negation)\\
	$f^{\iter}(x)$& &Maximum Iterate\\
	\hline 
	$(f\sqcup g)(x)$&$\cF^2\to\cF$&Preferential Union\\
	$(f\comp g)(x)$& &Sequential Composition \\
	\hline 
	\hline 
\end{tabular} 
\end{small}

\end{center}

Intuitively, partial mappings $\cF$ on $\DomTH$ represent programs. Applying such a mapping on a string of length one corresponds to applying a program on an input structure, e.g., a graph in 3-Colourability problem. The  set $\cF$ of partial functions on $\DomTH$ contains all functional constant (i.e., nullary) operations (the rows marked with $\cF$), and  is closed under unary (marked $\cF\to\cF$) and binary (marked $\cF^2\to\cF$) operations.

\vspace{1ex}

	\noindent \underline{Nullary (constant mappings on strings in $\DomH$):}  \ \ \ \ 
	$\id \in \cF$, 
	$ \module(\ch/\e)\in \cF$ 
	for all  $\module$ in  $\Modules$ and $\ch \in \ChFunctions$, and 	$\{(P=Q),\BG(P\neq Q) \}\subset\cF$,
	for all $P,Q\in \tau_{\rm reg}$.
	Here, $\id$ is the Identity function (Diagonal relation) on strings in $\Un^+$. 
Each $\module(\ch/\e)$, $\module\in \Modules$,  is an atomic module (action), a binary relation, that, intuitively, updates some of the registers in $\tau_{\rm reg}$, non-deterministically, i.e., multiple outcomes of an action $ \module(\e)$  are possible. With an instantiation of a concrete Choice function $\ch$ in $\module(\ch/\e)$, the relation $ \module(\e)$ becomes a partial  function function in $\cF$. 
Back Globally Non-Equal,  denoted $\BG(P\neq Q)(x)$, checks if 
the value stored in register $Q$ at any point earlier in string $x$ is \emph{different}  from the value in $P$ now.
Equality check $(P=Q)$ compares current interpretations of $P$ and $Q$.  Thus, access to domain elements is allowed only via  atomic updates (modules in $\Modules$ specified in a secondary logic) and (in)equality checks only.

\vspace{1ex}

\noindent \underline{Unary $\op_1 : \cF \to \cF$} (partial mappings on strings in $\DomH$ that depend on one partial function):
$ \op_1 \in \{ \rneg\,, ^{\iter}\} $.  Each operation $ \op_1$ in this set takes a function in $\cF$ and modifies it according to the semantics of  $ \op_1$. The modified operation is applied on strings in $\Un^+$.
Anti-Domain $\rneg f(x)$ checks if there is no outgoing $f$-transition from the state represented by the string $x$; 

Maximum Iterate  $f^{\iter}(x)$  is a ``determinization'' of  the Kleene star $f^*(x)$  (reflexive transitive closure). It 
outputs only the longest, in terms of the number of $f$-steps, transition out of all possible transitions from the same state produced 
by the Kleene star.\footnote {For readers familiar with the $\mu$ operator, we mention that $f^{\iter}\ :=\ \mu Z. ( \rneg f \cup f \comp Z)
$, although these constructs are not in our language.}

\vspace{1ex}

\noindent \underline{Binary $\op_2 : \cF^2 \to \cF$} (partial mappings on strings in $\Un^+$ that depend on two functions):
$
\op_2 \in \{ \comp,   \sqcup \}
$
These operations take two functions as arguments, and combine them to obtain a (partial) mapping on $\Un^+$.
Sequential Composition  $(f\comp g)(x)$ is the standard function composition  $g(f(x))$.  Preferential Union $(f\sqcup g)(x)$ applies $f$ and returns its result; but, if $f$ is not defined, it applies $g$.

We will routinely omit string variable $x$ in algebraic terms and work entirely with partial  mappings in $\cF$ on $\DomTH$. This is a typical notational convention in  algebraic setting such as groups, algebras of functions and binary relations, etc.

 \section{Semantics} 
 \label{sec:Semantics}

 \subsection{Atomic State-to-State Transitions}
  The semantics of the algebra is parameterized by an underlying transition system that specifies  structure-to-structure transitions associated with each atomic module symbol. 
 Formally,  we have a transition relation  $\sem{\cdot}$ that maps each atomic module symbol in $\Modules$ to a binary relation  on $\Un$ (the set of all relational $\tau$-structures over the same domain):
 $$
 \sem{\cdot}: \Modules  \to \Un\times\Un.
 $$
  In general, the relation $\sem{\cdot}$ is specified in any  (secondary) logic that constitutes the bottom level of the algebra.
 In Section \ref{sec:Atomic-Modules}, we give a specific example of such a logic. But, for now, it is sufficient to think of $\sem{\cdot}$  as an arbitrary binary relation on structures in $\Un$.
    The relation $\semaNoh{\module(\e)} \subseteq \Un \times \Un$, for $\module\in \Modules$,  is not necessarily functional. In this sense, atomic transitions are non-deterministic.
     But,  specific Choice functions (to be defined shortly), that  are given semantically only,   resolve atomic non-determinism, while taking the history into account.\footnote{Notice that concrete Choice functions are not a part of the syntax. The syntax, summarized in (\ref{eq:algebra}) and detailed in the table, has only a Choice function variable $\e$.  The terms impose constraints on possible Choice functions. }  
   To explain it formally, we first give a definition of a tree.

  \vspace{1ex} 
  
  A \emph{tree} over alphabet $\Un$ is a (finite or infinite) nonempty set $\Tree\subseteq \Un^*$ such that for all $x\cdot c\in \Tree$, with $x\in \Un^*$ and $c\in \Un$, we have $x\in \Tree$.
  The elements of $\Tree$ are called \emph{nodes}, and the empty word $\mathbf{e}$ is the \emph{root} of $\Tree$. For every $x\in \Tree$, the nodes $x\cdot c\in \Tree$ where $c\in \Un$ are the \emph{children} of $x$.  A node with no children is a \emph{leaf}.

  \vspace{-1.2ex}
  
  \newcounter{inlineequation}
  \setcounter{inlineequation}{0}
  \renewcommand{\theinlineequation}{(\Roman{inlineequation})}
  
  \newcommand{\inlineeq}[1]{\refstepcounter{inlineequation}\theinlineequation\ \(#1\)}

  \let\sse=\subseteq
  \let\vf=\varphi
  \def\vc#1#2{#1 _1\zd #1 _{#2}}
  \def\zd{,\ldots,}

  \colorlet{shadecolor}{gray!30}

  \tikzset{My Style/.style={black, draw=shadecolor,fill=shadecolor, minimum size=0.1cm}}

  \begin{figure}[H]
  	
  	\hspace{15ex} 
  	\begin{tikzpicture}[
  		grow= right,
  		level distance=1.2cm,
  		level 1/.style={sibling distance=1.0cm},
  		level 2/.style={sibling distance= 0.6cm},
  		level 3/.style={sibling distance= 0.6cm},
  		level 4/.style={sibling distance= 0.58cm},
  		level 5/.style={sibling distance= 0.38cm}]
  		
  		\tikzset{level 1 onwards/.style={level distance=0.2cm}}
  		\tikzset{level 2 onwards/.style={level distance=0.2cm}}
  		\tikzset{level 3 onwards/.style={level distance=0.4cm}}

  		\node (Root)(z) {$\emp$}
  		child {
  			node (a)[left=0.2cm] {$\strB$} 
  			child { node (b)[left=3mm] {} edge from parent[dashed] 
  				node[right,draw=none]{} 
  			}
  			child { node (c)[left=3mm] {} edge from parent[dashed] 
  				node[right,draw=none]{} }
  			edge from parent node[left,draw=none] {}
  		}
  		child {
  			node (d) [left=0.2cm] 
  			{$\strA$} 
  			child { node (e) {$\strA\cdot\strB$}
  				child { node(f) [right=0.2mm]
  					{$\strA\cdot\strB\cdot\strB$}
  					child { node(g)[right=1mm] {$\strA\cdot\strB\cdot\strB\cdot\strB$} 
  					edge from parent node[right,draw=none]{}}
  				child { node(h)[right=1mm][My Style] {$\strA\cdot\strB\cdot\strB\cdot\strA$}
  		  					edge from parent node[left,draw=none]{} }
  				edge from parent
  				node[left,draw=none]{}}
  			child { node(x)[right=0.2mm] {$\strA\cdot \strB\cdot\strA$} 
  				child { node(m) {} edge from parent[dashed]
  					node[right,draw=none]{}}
  				child { node(q) {} edge from parent[dashed]
  					node[left,draw=none]{}}
  				edge from parent node[right,draw=none]{}}
  		}
  		child { node (o){$\strA\cdot \strA$}
  			child { node(p)[left=1mm] {} edge from parent[dashed]
  				node[right,draw=none]{}}
  			child { node (q)[left=1mm] {} edge from parent[dashed] 
  				node[right,draw=none]{} }
  		}
  	};
  	
  	\tikzset{every loop/.style={min distance=10mm,in=160,out=120,looseness=10}}
  	
  	\tikzset{every loop/.style={min distance=10mm,in=160,out=120,looseness=10}}
  	
  	\draw[ultra thick,gray] (z)--(d);
  	\draw[ultra thick,gray] (d)--(e);
  	\draw[ultra thick,gray] (e)--(f);
  	\draw[ultra thick,gray] (f)--(h);

  \end{tikzpicture}
  \caption{A tree over $\Un$.  Let $\module_{17}\in \Modules$ and $(\strA,\strB) \in \sem{\module_{17}(\e)}$, $(\strB,\strA) \in \sem{\module_{17}(\e)}$,  $(\strB,\strB) \in \sem{\module_{17}(\e)}$. A concrete Choice function $
  	\ch: \Modules  \to  (\Un^+ \partialmap \Un^+)$ maps
  	$\module_{17}(\e)$   to a set of mapping containing, for example, $\strA\cdot\strB\cdot\strB \mapsto \strA\cdot\strB\cdot\strB\cdot\strA$, and other such mappings shown by the thicker edges in the tree. Term $\term \ := \ \module_{17}\comp \module_{17}\comp\module_{17}(\ch/\e)$, with this specific $\ch$, makes transitions along the branch in this tree shown by the thicker edges. There is a one-to-one correspondence between that branch and the string shown by the shaded area, which is a \emph{trace} of $\term$ from the input structure $\strA$ that corresponds to $\ch$.  } 
\label{fig:tree}
\end{figure}

\vspace{-2.2ex}

\vspace{1ex}

\noindent\textbf{Notations for Strings and Partial Functions} 
We use $\emp$ to denote the empty string;
$\sstring(i)$, for $i\geq 1$,  to denote the $i$'s letter in string $\sstring$; and $\sstring_i$ to denote the prefix of string $s$ ending in position $i$. In particular, $\sstring_0=\emp$.
We use the following notation for partial functions:
$A \partialmap B \ :=  \  \bigcup_{C\subseteq A}   (C\to B)$. 

\vspace{3ex}

\subsection{Choice Functions} 
Each prefix $\sstring_i$ of a string $\sstring$ represents a history, i.e., a sequence of states -- elements of $\Un$. For each history  $\sstring_i$, Choice function $\ch$ selects one possible outcome of an atomic module $\module(\e)$, out of those possible, i.e., those in its interpretation $ \sem{\module(\e)}$, see Figure 1. It decides $\module(\e)$'s outcome at $\sstring_i$, if such an outcome is possible according to the transition system $ \sem{\cdot}$.  The outcome is history-dependent, that is, for the same module $\module(\e)$, we may have different outcomes at different time points $\sstring_i$ and $\sstring_j$, $i\neq j$. We now present this intuition formally.

\begin{definition}\label{def:Choice-function-atomic-modules}
	Let $\Modules$ be a finite set of module symbols, and $\Un$  be an alphabet.
A \emph{Choice function}  is a function 
$
\ch: \Modules  \to  (\Un^+ \partialmap \Un^+)$
such that
\begin{itemize}
		\item 	$	\ch(\module(\e)) = \{ (\sstring_i \mapsto \sstring_{i+1})  \mid   
	i\geq 1   \text{  and  exists }   (\sstring(i),\sstring(i+1))  \in  \sem{\module(\e)} 
	\text{ or }    i=0   \text{ and there exists }  \sstring(i+2) \text{ such that  } (\sstring(i+1),\sstring(i+2))  \in  \sem{\module(\e)}  \}$. 
\end{itemize}
\end{definition}
\noindent Note that $\ch$ outputs partial functions $\Un^+ \partialmap \Un^+$, so it resolves atomic non-determinism. Observe that, for the same module, different choices can be made at different time points because  we merely require the existence of a transition 
in $\sem{\module(\e)}$.

The second case,  $i=0$, is included so that we can have one tree (cf. Figure \ref{fig:tree}) per each term, with branches that correspond to different Choice functions, including the choice of allowable input structures. 
 More specifically, when $i=0$, we have  $\sstring_i=\emp$.  According to the definition above, if $(\strA,\strB) \in \sem{\module(\e)}$, then 	$(\emp\mapsto \strA) \in \ch(\module(\e))$, where $\strA$ is a possible input structure, see Figure 1.
Note  that 
$\sstring(i)=\sstring(i+1)$ is allowed -- a ``no-change'' transition
extends the string $\sstring_i$  by repeating the same letter in $\Un$. 

\vspace{1ex}

The set of all Choice functions is denoted $\ChFunctions$.

\vspace{1ex}

To summarize, a concrete Choice functions, e.g., $\ch$, uses the transition relation $\Tr$ to assign semantic meaning to an atomic expression in $\Modules$ as a set of \emph{string-to-string transductions}, that is, as a (partial) function on $\Un^+$.  Each $\module(\e)$ corresponds to some non-deterministic transitions in a transition system $\Tr$. A concrete Choice function, as a semantic  instantiation of $\e$, resolves this non-determinism, by taking the history into account. It extends the history (a string in $\Un^+$) by one letter in the alphabet $\Un$ that corresponds to one of the possible outcomes of $\module(\e)$  according to $\Tr$.
Intuitively, at each time step, $\ch$ resolves atomic non-determinism, while taking the history, that starts with an input structure, into account.  Thus, at different time points, there could be different outcomes.

\colorlet{shadecolor}{gray!50}
\tikzset{My Style/.style={black, draw=shadecolor,fill=shadecolor, minimum size=0.1cm}}

\begin{figure}[H]
	\begin{center}
		\tikzset{every picture/.style={line width=0.75pt}} 

\begin{tikzpicture}[x=0.75pt,y=0.75pt,yscale=-.5,xscale=.5]
	
	\draw [line width=1.5]    (264.33,72.8) -- (343.33,73.64) ;
	\draw [shift={(346.33,73.67)}, rotate = 180.6] [color={rgb, 255:red, 0; green, 0; blue, 0 }  ][line width=1.5]    (14.21,-4.28) .. controls (9.04,-1.82) and (4.3,-0.39) .. (0,0) .. controls (4.3,0.39) and (9.04,1.82) .. (14.21,4.28)   ;
	\draw [shift={(264.33,72.8)}, rotate = 180.6] [color={rgb, 255:red, 0; green, 0; blue, 0 }  ][line width=1.5]    (0,6.71) -- (0,-6.71)   ;

	\draw   (212,74.4) .. controls (212,69.21) and (217,65) .. (223.17,65) .. controls (229.33,65) and (234.33,69.21) .. (234.33,74.4) .. controls (234.33,79.59) and (229.33,83.8) .. (223.17,83.8) .. controls (217,83.8) and (212,79.59) .. (212,74.4) -- cycle ;

	\draw   (518,74.4) .. controls (518,69.21) and (523,65) .. (529.17,65) .. controls (535.33,65) and (540.33,69.21) .. (540.33,74.4) .. controls (540.33,79.59) and (535.33,83.8) .. (529.17,83.8) .. controls (523,83.8) and (518,79.59) .. (518,74.4) -- cycle ;
	\draw   (473,74.4) .. controls (473,69.21) and (478,65) .. (484.17,65) .. controls (490.33,65) and (495.33,69.21) .. (495.33,74.4) .. controls (495.33,79.59) and (490.33,83.8) .. (484.17,83.8) .. controls (478,83.8) and (473,79.59) .. (473,74.4) -- cycle ;
	\draw    (495.33,74.4) -- (518,74.4) ;
	\draw [line width=3]    (116,74.4) -- (211,74.4) [My Style];
	\draw [line width=3]    (382,74.4) -- (472,74.4) [My Style];

	\draw (260,10) node [anchor=north west][inner sep=0.75pt]   [align=left] {$\displaystyle \ch(\module(\e))$};
	\draw (219,90) node [anchor=north west][inner sep=0.75pt]   [align=left] {$\strB$};
	\draw (525,91.67) node [anchor=north west][inner sep=0.75pt]   [align=left] {$\strA$};
	\draw (479,92) node [anchor=north west][inner sep=0.75pt]   [align=left] {$\strB$};

\end{tikzpicture}
	\end{center}
	\caption{Choice function $\ch$ makes a specific selection $(\sstring\strB \mapsto \sstring\strB\strA)$,  provided that $(\strB,\strA) \in \sem{\module(\e)}$, out of all available possibilities $(\sstring_i \mapsto \sstring_{i+1})$  according to $\sem{\module(\e)}$. In set-theoretic terms, it has the standard meaning of picking an element out of a set of elements. The elements, in our case, are mappings on strings. }
	\label{fig:string-to-string-transition}
\end{figure}
Each algebraic term imposes constraints on allowable Choice functions, just as any first-order formula imposes constraints on possible instantiations of its free variables. The formal definition of the extension of Choice functions to all terms is given next. Recall that only one free variable $\e$ is allowed per term.

\subsection{Extension of Choice Functions to All Terms}
\label{sec:extension-of-Choice-functions}

So far, Choice functions were defined on the domain $\Modules$ of atomic module symbols. They are now   extended to operate on all terms, to define their semantics. 
Formally, 
\begin{equation}\label{eq:terms-partial-maps}
	\che: \Terms  \to  (\Un^+ \partialmap \Un^+).
\end{equation}
The extension   $\che$ of $\ch$ provides semantics to all  algebraic terms as string-to-string transductions.

\vspace{1ex}

\begin{enumerate}
	
	\item $\che(\module(\e)) =\ch(\module(\e))$ for $\module \in \Modules$.

	\item 
	$\che(\id)  = \{( \sstring_i\mapsto \sstring_i) \mid i \in \nat \} $.

	\item $\che(\rneg \term) = \{(\sstring_i \mapsto \sstring_i) \mid $  there is no string $s'$, Choice function  $\ch' $  and $k\geq i$ such that  $(\sstring_i \mapsto \sstring_k') \in \che'(\term)$, where $\sstring'_i=\sstring_i \}$.

	\item  $\che(\term\comp \term') = \{(\sstring_i \mapsto \sstring_j) \mid $   if there exists $l$, with $i\leq l \leq j$, such that  $(\sstring_i\mapsto \sstring_l) \in \che(\term)$ and  $(\sstring_l\mapsto \sstring_j) \in \che(\term')\}$.
	
	\item $\che(\term  \sqcup \term') = \{(\sstring_i \mapsto \sstring_j) \mid $  
	there exists $\che'$ such that $(\sstring_i\mapsto \sstring_j) \in \che'(\term)$ 
	and $\che(\term  \sqcup \term') = \che'(\term) $  or\\
	there is no $\che'$ such that $(\sstring_i\mapsto \sstring_k) \in \che'(\term)$ for any  $k$ 
	and 
	there exists $\che''$ such that $\che(\term  \sqcup \term') = \che''(\term')\}$.

	\item 	$\che(\term^\iter) = \{(\sstring_i \mapsto \sstring_j) \mid $  $i=j$ and $( \sstring_i \mapsto \sstring_j)\in \che(\rneg \term)$ \ \ \ or \ $i<j$ \ and 
	
	(1) 
	there exists  $l $ with $i\leq l \leq j$  such that $ (\sstring_i \mapsto  \sstring_l) \in \che(\term^\iter)$  and $(\sstring_l\mapsto \sstring_j) \in \che(\term)$, and  
	
	(2) there is no $ k$ with $  l < k \leq j $  and $\sstring(l) = \sstring(k)\}$. Thus,   the interpretation of $\term^\iter$  should not induce loops in the transition graph given by the binary relation $\sem{\cdot}$.\footnote{This is similar to how a transitive closure of a graph is defined.  The graph, in our case, is given by the transition system $\sem{\cdot}$. In other applications, this condition may be omitted.}
	
	\item  $\che(P = Q) = \{(\sstring_i \mapsto \sstring_i) \mid $  
	$Q^{s(i)} = P^{s(i)}\}$.
	
	\item $\che(\BG(P\neq Q)) = \{(\sstring_i \mapsto \sstring_i) \mid $    there is no   $1\leq l< i$ such that 
	$Q^{\sstring(l)} = P^{\sstring(i)}\}$.

	Observe that  Back Globally (\!$\BG(P\neq Q)$)  may ``look back'' through the entire string, not just a computation of a specific sub-term.
	\footnote{Note that the intended use of the algebra is to evaluate algebraic expressions with respect to an  input structure,  that is a one-letter string. In this use, the $\BG$ construct never looks into the ``pre-history'', 
			 since, in that use,  all strings of length greater than one are traces of some processes.}
	\footnote{An alternative definition of \!$\BG(P\neq Q)$ is possible, where the scope of  the condition is specified  explicitly (by e.g., adding parentheses). But, it is not needed since the scope can be controlled by a cleaver use of fresh ``registers'', i.e., unary predicate symbols in $\tau_{\rm reg}$.}
	
\end{enumerate}
Thus, the semantics associates, with each algebraic expression $\term$, a functional binary relation (i.e., a partial mapping) $\che(\term)$ on strings of relational structures, that depends on  a specific Choice function $\ch$.

\vspace{1ex}

\begin{example}\label{ex:strings-as-functions}
	{\rm
		Consider term $\module_1 \comp \module_2(\e)$ and structure $\strA$. 	Suppose  $\ch$ is such that 
		\begin{align}
			(\strA \mapsto   \strA\cdot \strB) & \in 	\ch (\module_1(\e)), \nonumber \\ 
			(\strA \cdot \strB  \mapsto  \strA\cdot \strB\cdot \strC) & \in 	\ch (\module_2(\e)). \nonumber
		\end{align}
		Then
		$
		(\module_1 (\e)\comp \module_2(\e) ) ^{\Tr}(\ch/\e)(\emp) = \strA \cdot \strB \cdot \strC.
		$
		The same word can also be obtained using infinitely many  different terms, e.g.,
		$
		(\module_1\comp  \rneg \rneg \id \comp \module_2\comp \id (\e)) ^{\Tr}(\ch/\e) (\emp) = \strA \cdot \strB \cdot \strC
		$, and, similarly $( \rneg \rneg  \rneg \rneg \module_1 \comp  \module_1\comp  \rneg \rneg \module_2 \comp \module_2(\e)) ^{\Tr}(\ch/\e) (\emp) = \strA \cdot \strB \cdot \strC$.
	}
\end{example}

\subsubsection{Maximum Iterate vs the Kleene Star}
To  give a comparison of Maximum Iterate with the Kleene star (i.e., the reflexive transitive closure, a construct known from Regular Languages), we show how to define the two operators side-by-side, inductively.

\colorlet{shadecolor}{gray!50}
\tikzset{My Style/.style={black, draw=shadecolor,fill=shadecolor, minimum size=0.1cm}}

\begin{center}

\begin{minipage}{.4\textwidth}
	\begin{figure}[H]
		\begin{center}
			\vspace{-3ex}
			\tikzset{every picture/.style={line width=0.75pt}} 

\begin{tikzpicture}[x=0.75pt,y=0.75pt,yscale=-.5,xscale=.5]

\draw   (213.74,93.03) .. controls (213.74,87.84) and (208.74,83.63) .. (202.57,83.63) .. controls (196.4,83.63) and (191.4,87.84) .. (191.4,93.03) .. controls (191.4,98.23) and (196.4,102.44) .. (202.57,102.44) .. controls (208.74,102.44) and (213.74,98.23) .. (213.74,93.03) -- cycle ;
\draw   (267.74,93.03) .. controls (267.74,87.84) and (262.74,83.63) .. (256.57,83.63) .. controls (250.4,83.63) and (245.4,87.84) .. (245.4,93.03) .. controls (245.4,98.23) and (250.4,102.44) .. (256.57,102.44) .. controls (262.74,102.44) and (267.74,98.23) .. (267.74,93.03) -- cycle ;
\draw   (160.74,93.03) .. controls (160.74,87.84) and (155.74,83.63) .. (149.57,83.63) .. controls (143.4,83.63) and (138.4,87.84) .. (138.4,93.03) .. controls (138.4,98.23) and (143.4,102.44) .. (149.57,102.44) .. controls (155.74,102.44) and (160.74,98.23) .. (160.74,93.03) -- cycle ;
\draw    (149.57,83.63) .. controls (189.57,53.63) and (255,5.69) .. (341.83,93.53) ;
\draw [shift={(341.83,93.53)}, rotate = 225.37] [color={rgb, 255:red, 0; green, 0; blue, 0 }  ][line width=0.75]    (17.64,-4.9) .. controls (13.66,-2.3) and (10.02,-0.67) .. (6.71,0) .. controls (10.02,0.67) and (13.66,2.3) .. (17.64,4.9)(10.93,-4.9) .. controls (6.95,-2.3) and (3.31,-0.67) .. (0,0) .. controls (3.31,0.67) and (6.95,2.3) .. (10.93,4.9)   ;
\draw    (202.57,83.63) .. controls (280.33,32.17) and (291.67,52.5) .. (331.6,83.9) ;
\draw    (256.57,83.63) .. controls (288,61.69) and (307,61.17) .. (331.6,83.9) ;
\draw [line width=3]    (47.4,93.03) -- (138.4,93.03) [My Style] ;
\draw [line width=3]    (160.74,93.03) -- (191.4,93.03)[My Style]  ;
\draw [line width=3]    (213.74,93.03) -- (245.4,93.03)[My Style]  ;

\draw (139,108.8) node [anchor=north west][inner sep=0.75pt]   [align=left] {$\strA_1$};
\draw (193,109.2) node [anchor=north west][inner sep=0.75pt]   [align=left] {$\strA_2$};
\draw (246,108.6) node [anchor=north west][inner sep=0.75pt]   [align=left] {$\strA_3$};
\draw (325.4,107.8) node [anchor=north west][inner sep=0.75pt]   [align=left] {$\strB$};

\end{tikzpicture}
		\end{center}
		\caption{Maximum Iterate: \\ $\term^\iter_0:= \rneg t$,\ \ \  $\term^\iter_{n+1}:= \term^\iter_{n} \comp  t$, \ \ \ $\term^\iter := \bigcup_{n\in\nat} \term^\iter_n$. \\ It is a deterministic operator (a partial function). }
		\label{fig:Kleene-star}
	\end{figure}
\end{minipage}
{\ \ \ }
\begin{minipage}{.5\textwidth}
	\begin{figure}[H]
		\begin{center}
			\tikzset{every picture/.style={line width=0.75pt}} 

\begin{tikzpicture}[x=0.75pt,y=0.75pt,yscale=-0.5,xscale=0.5]

\draw   (221,61.93) .. controls (221,56.74) and (226,52.53) .. (232.17,52.53) .. controls (238.33,52.53) and (243.33,56.74) .. (243.33,61.93) .. controls (243.33,67.12) and (238.33,71.33) .. (232.17,71.33) .. controls (226,71.33) and (221,67.12) .. (221,61.93) -- cycle ;
\draw [line width=3]    (130,61.93) -- (221,61.93)[My Style]  ;
\draw    (232.17,52.53) .. controls (271.77,22.83) and (380.46,18.36) .. (397.84,62.77) ;
\draw [shift={(398.33,64.14)}, rotate = 251.14] [color={rgb, 255:red, 0; green, 0; blue, 0 }  ][line width=0.75]    (10.93,-3.29) .. controls (6.95,-1.4) and (3.31,-0.3) .. (0,0) .. controls (3.31,0.3) and (6.95,1.4) .. (10.93,3.29)   ;
\draw    (265.43,37.63) .. controls (285.38,35.73) and (290.89,50.98) .. (296.53,61.94) ;
\draw [shift={(297.43,63.63)}, rotate = 241.39] [color={rgb, 255:red, 0; green, 0; blue, 0 }  ][line width=0.75]    (10.93,-3.29) .. controls (6.95,-1.4) and (3.31,-0.3) .. (0,0) .. controls (3.31,0.3) and (6.95,1.4) .. (10.93,3.29)   ;
\draw    (288.43,32.63) .. controls (308.48,30.72) and (331.27,50.7) .. (338.51,62.09) ;
\draw [shift={(339.43,63.63)}, rotate = 241.39] [color={rgb, 255:red, 0; green, 0; blue, 0 }  ][line width=0.75]    (10.93,-3.29) .. controls (6.95,-1.4) and (3.31,-0.3) .. (0,0) .. controls (3.31,0.3) and (6.95,1.4) .. (10.93,3.29)   ;

\draw    (232.17,52.53) .. controls (269.94,33.45) and (284.14,68.04) .. (255,62.27) ;
\draw [shift={(249.7,61.77)}, rotate = 363.92] [color={rgb, 255:red, 0; green, 0; blue, 0 }  ][line width=0.75]    (10.93,-3.29) .. controls (6.95,-1.4) and (3.31,-0.3) .. (0,0) .. controls (3.31,0.3) and (6.95,1.4) .. (10.93,3.29)   ;

\draw (220,75.8) node [anchor=north west][inner sep=0.75pt]   [align=left] {$\strA_1$};
\draw (287,74.67) node [anchor=north west][inner sep=0.75pt]   [align=left] {$\strB_1$};
\draw (331,74.67) node [anchor=north west][inner sep=0.75pt]   [align=left] {$\strB_2$};
\draw (384,73.67) node [anchor=north west][inner sep=0.75pt]   [align=left] {$\strB_3$};

\end{tikzpicture}
		\end{center}
		\caption{The Kleene Star:\\
			 \ \, $\term^*_0:= \id$,\  \  \  $\term^*_{n+1}:= \term \comp \term^*_{n}$, \  \ \ $\term^* := \bigcup_{n\in\nat} \term^*_n$. \\
			  It is a non-deterministic operator (not a function).}
		\label{fig:maximum-iterate}
	\end{figure}
\end{minipage}

\end{center}

		\subsection{Terms as Partial  Functions (String-to-String Transductions)}

				\vspace{1ex} 
		
	The  given semantics  allows us to view algebraic terms, parameterized by Choice functions, as (partial) functions.  Let us  make the functional view more explicit. First, we do it for atomic modules, and then generalize to all terms.
		
			\vspace{1ex} 

	\noindent\textbf{Atomic Modules} Given a specific Choice function, each module symbol $\module$  is interpreted as a semantic partial  mapping $\module^{\Tr}$ on strings in $\Un^+$:
	\begin{equation}\label{eq:atomic-mapping}
		\module^{\Tr}(\ch/\e)(\sstring_i ) = \sstring_{i+1}    \ \text{ iff } \   (\sstring_i \mapsto \sstring_{i+1}) \in \ch(\module(\e)),
	\end{equation}
	where, as before,  a semantic instantiation of the function variable $\e$ by the concrete Choice  function  $\ch$ is denoted $(\ch/\e)$.

	\vspace{3ex} 
	
	\noindent\textbf{All Terms}  	
	The partial mappings (\ref{eq:atomic-mapping}) are extended from atomic to all terms, as presented in Section \ref{sec:extension-of-Choice-functions}.	Each term, with a specific Choice function, is interpreted as  a partial function $\term^{\Tr}(\ch/\e):  \Un^+ \partialmap \Un^+$ as follows:
	\begin{equation}\label{eq:mapping}
		\begin{array}{l}
			\term^{\Tr}(\ch/\e)(s_i) = s_j      \ \text{ iff } \  \text{ the extension $\che$  of $\ch$ }  
			\text{for $\term$ exists, }
			\text{ and } (\sstring_i\mapsto \sstring_j) \in \che(\term).
		\end{array}
	\end{equation}

Recall that the terms  of the algebra take other functions as arguments.  
The summary of the syntax is presented in the table  in Section \ref{sec:Algebra}. In that table, $\cF$ denotes	(partial) functions on $\Un^+$.
Thus, according to (\ref{eq:mapping}), the partial functions  in the closure of ${\cal F}$ under all algebraic operations (\ref{eq:algebra}) map strings of structures to strings of structures, i.e., are \emph{string-to-string transductions}.   
This \emph{functional} view is essential to develop an  algebraic formalization of  non-deterministic computations.

	\subsection{Tests vs Processes}\label{sec:tests-vs-processes}

	Tests play a special role in our formalisation of non-deterministic computations as they specify computational decision problems.
	While tests are, essentially, ``yes'' or ``no'' questions, processes specify the actual content of these questions. We now define these notions formally.
	
	\begin{definition}\label{def:test-process}
	A term $\term$ is called a \emph{test} if, for all $\ch$, the extension $\che$ of $\ch$ returns an identity mapping, that is, if in the interpretation $\che(\term)$ of $\term$,  we have that $i=j$ in each map $(\sstring_i \mapsto \sstring_j ) \in  \che(\term)$.  Otherwise, $\term$ is called a \emph{process}. 
\end{definition}
\noindent Thus, every test is a subset of Identity (Diagonal) relation on $\Un^+$.  The cases for tests in the definition of an extension of Choice function are 2,3,7 and 8.

\vspace{1ex}

Note that it is possible that an atomic module does not change the interpretations of any relations (``registers''), and thus does not make a transition to another state in the transition system $\Tr$. However, such a module would not be considered a test because it still extends the current string by repeating the same letter. 

\vspace{1ex}

Notice an important difference between  tests and processes.  A test can ``reject'' an input string because the domain of the corresponding partial function is empty. For example, 
 $\rneg \id(x)$ rejects all strings because $\che(\rneg \id)$ is  always undefined. A test can also ``accept'' a string. 
For instance,  $\id(x)$ is everywhere defined and thus ``accepts'' every string. The situation with processes is quite different. Every process defines a string-to-string transduction. It extends a string, provided it is defined in that string.

\section{Dynamic Logic}\label{sec:Dynamic-Logic}

  We now
provide an alternative (and equivalent) two-sorted version of the syntax  
in the form of  a  Dynamic logic. 
While the two formalizations are equivalent, in many ways, it is easier to work with the logic. 
The syntax is given by the grammar:  
\begin{equation}
\label{eq:dynamic-logic}
\begin{array}{c}
\hspace{-4ex}	\term  :: =    \module(\e) \mid  
\!	\id \! \mid  \rneg\term   \mid  
	\term\comp \term  \mid 
	\term  \sqcup \term \mid     \term ^\iter   
		 \mid\! P= Q \!\mid \!\BG(P\neq Q)\!  \mid \!  \phi ? \\ 
\phi :: = \   | \term \rangle \phi \mid  \mT 
 \ \mid \ \neg  \phi \  \mid \ \phi \land \phi \ \mid \ \phi \lor \phi 
\end{array}
\end{equation}
In Dynamic Logic literature, the expressions in the first line 
are typically called \emph{process} terms, and those in the second line \emph{state} formulae or \emph{tests} (cf. Definition \ref{def:test-process}). 
	State formulae $\phi$  are ``unary'' in the same sense as $P(x)$ is a unary notation for $P(x,x)$. Semantically, they are subsets of the identity relation. 
The  state formulae in the second line of (\ref{eq:dynamic-logic}) are 
 shorthands that use the operations in the first line: $  | \term \rangle \ \phi := \rneg \rneg (\term \comp \phi)$ is the ``diamond'' modality claiming the existence of a  successful execution of $\term$ leading to $\phi$. In addition, we define more  familiar  operations $ \mT : = \id$, 	 
 $ \neg  \phi : = \rneg \phi$, \ \  $\phi \land \psi := \phi \comp \psi$, \ \ $\phi \lor \psi := \phi \sqcup \psi$, \  and  also $ \phi ? :=  \rneg \rneg  \phi$, appearing in the top line of (\ref{eq:dynamic-logic}). 
   In addition, we define $\Dom(\term) := \rneg \rneg (\term)$.

An interesting property is  that Unary Negations are not idempotant (double negation does not cancel to a negation-free expression), 
   but, they  are weakly idempotant: 
   a triple negation of it cancels to a single one.
  We will come back to this property in Examples \ref{ex:three-negations} and \ref{ex:two-negations}, after the notion of a strong term equality is introduced in Section \ref{sec:Strong-Equalities}.

 \vspace{1ex}

 As a  general rule, we will refer to algebraic process-terms of the form $\phi?$  in the first line of (\ref{eq:dynamic-logic}) as tests, and to expressions in the second line of (\ref{eq:dynamic-logic}) as state formulae. But,  the main difference between the terms ``test'' and ``state formulae'' is mainly in the context of their use, since state formulae are merely a syntactic sugar of the algebra. Clearly, any process, including $\phi?$, can be transformed into a state formula by adding Antidomain ($\rneg\!$ ) or Domain ($\rneg\rneg$) to it.

\subsection{Semantics of Dynamic Logic } To interpret the Dynamic Logic (\ref{eq:dynamic-logic}), we need to provide semantics to both processes and state formulae. Processes 
are interpreted as before -- as  partial mappings (parameterized with $\ch$) on strings in $\Un^+$, that can be viewed as string-to-string transductions (\ref{eq:mapping}). For state formulas we define, for all concrete Choice functions $\ch$ and strings $\sstring$,
\begin{equation}	\label{eq:state-formulas}
\Tr, \sstring \models \phi(\ch/\e) \ \ \  \text{ iff }  \ \ \  \phi^{\Tr}(\ch/\e)(s) = s.
\end{equation}
Notice that (\ref{eq:state-formulas}) is a particular case of  (\ref{eq:mapping}), the partial mapping specified by a term, where the terms are tests (cf. Definition \ref{def:test-process}). This is immediate from  cases  2,3,7 and 8 of the extended Choice function $\che$. 

In this paper, the relation $\Tr$ (that is specified by a secondary logic at the bottom level of the algebra) is fixed, and is omitted from  $\Tr, \sstring \models \phi(\ch/\e)$ (cf. (\ref{eq:state-formulas})), where $\phi$ is a state formula. But, of course, other settings are possible as well.

\subsection{Implicit Quantification in State Formulae }  \label{sec:implicit-quantification}
State formulae exhibit implicit quantification over the Choice function variable $\e$.
Indeed, tests of the form $\rneg\rneg\term$ represent the domain of $\term$ (which is, semantically, is a partial mapping (\ref{eq:mapping})), and, intuitively, mean that there is a Choice function $\ch$ such that 
 $ \sstring \models \rneg\rneg\term(\ch/\e)$.  On the other hand, tests of the form  $\rneg\term$ are universal, because they claim that there is no Choice function that corresponds to a successful execution of $\term$. To summarize, we have the following implicit quantification over Choice functions: 
 \begin{equation}\label{eq:quantifiers}
 		\begin{array}{rcl}
 		\rneg\rneg\term(\e)(x)  & -   &\text{ implicitly, } \exists \e, \\
 		\rneg\term(\e)(x)  & - &  \text{ implicitly, } \forall \e.
 	\end{array}
 \end{equation}
These are \emph{second-order} quantifiers, since we quantify over functions. The quantifiers can interleave, since expressions of the form (\ref{eq:quantifiers})  may appear within the term $\term$. 
The two kinds of tests  mentioned in (\ref{eq:quantifiers}) are of special interest in the study of computational problems specified by a term, as  discussed in Section \ref{sec:term-computational-problem}.

\vspace{1ex}

\subsection{Programming Constructs} \label{sec:programming-constructs} It is well-known that in Propositional Dynamic Logic  \cite{DBLP:journals/jcss/FischerL79}, imperative programming  constructs  are definable using a fragment of  regular languages, see the Dynamic Logic book by Harel, Kozen and Tiuryn \cite{HarelKozenTiuryn-DL}. The corresponding  language is called \emph{Deterministic Regular (While) Programs} in  \cite{HarelKozenTiuryn-DL}.\footnote{	Please note that Deterministic Regular expressions and the corresponding Glushkov automata are unrelated to what we study here. In those terms, expression $a \comp a^*$ is Deterministic Regular, while $a^* \comp a$ is not. Both expressions are not in our language.}  In our case, imperative  constructs  are definable (cf. \cite{JacksonStokes:2011}) by: 
$$
\begin{array}{ll}
	\begin{array}{l}
		{\bf skip}\ : =\ \id,  \ \ \ \ 
		{\bf fail}\ : =\ \rneg \id,\\
		{\bf if} \ \phi \ {\bf then}\  \term \ {\bf else} \ \term' \ : = \ (\phi? \comp \term) \sqcup   \term',
	\end{array} 
	&
	\begin{array}{l}
		{\bf while} \ \phi \ {\bf do}\  \term : = (\phi? \comp \term)^\iter  \comp (\rneg \phi?),\\
		
		{\bf repeat} \ \term \ {\bf until}\  \phi : = \term \comp ((\rneg \phi?) \comp \term)^\iter  \comp  \phi?.
	\end{array} 	
\end{array} 	
$$
\noindent Thus,  importantly, the non-determinism  of  operations $*$ and $\cup$ of regular languages is not needed to formalize these programming constructs.


\subsection{Duality between Sets of Strings and Transductions}
Recall the difference between tests and processes discussed in Section \ref{sec:tests-vs-processes}.
Notice that, an application of a process-term $\term$ can only make a string longer.  
This is because,  in (\ref{eq:mapping}), if the extension $\che(\term)$ of $\ch$ for process $\term$ exists, it is always the case that, in $(\sstring_i \mapsto \sstring_j ) \in  \che(\term)$, we have $i < j$. 
Thus, the mapping associated with a process-term can be viewed as a set of (non-empty) strings over the alphabet $\Un$. We call the (non-empty) strings in this set the \emph{deltas of $\term$}. In certain contexts, we also call them  \emph{traces} of $\term$. The term ``delta'' is more appropriate when we talk about applying a term, e.g., when we say that we \emph{apply a  string delta selected by $\che$}. A ``trace'' is something left \emph{after} the execution of $\term$ from an inputs structure $\strA$. We talk more about traces in Section (\ref{sec:traces}). Notice that \emph{tests} never extend a string. Their deltas are the empty strings.

\colorlet{shadecolor}{gray!50}
\tikzset{My Style/.style={black, draw=shadecolor,fill=shadecolor, minimum size=0.1cm}}

\begin{figure}[H]
	\begin{center}
		\input{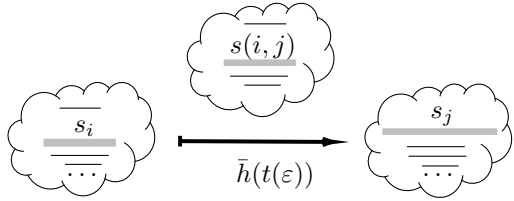}
	\end{center}
	\caption{ The set of strings in the middle, shown above the arrow,  is the set of deltas of $\term(\e)$. For $(\sstring_i \mapsto \sstring_j ) \in  \che(\term(\e))$, the string $\sstring(i,j)$ is one of such deltas. It extends $\sstring_i$ and produces  $\sstring_j$. Note that, even if the set of deltas  is finite,  the domain and codomain of $\che(\term(\e))$ are, in general, infinite (shown on the left and on the right). }
	\label{fig:cloud}
\end{figure}
In fact, deltas of the form $\sstring(1,j)$, for some $j$, are associated with certificates for computational decision problems, and are related to some equivalence classes of Choice functions that ``agree'' on $\sstring(1,j)$, as will be seen shortly, in Section \ref{sec:traces}.

In summary,   the interpretation of each process-term can be viewed as a set of deltas -- non-empty strings -- ``extensions'' applied to strings on the input.

\label{sec:structure-choice-functions}

\section{Main Computational  Task} 
\label{sec:main-task}

In this section, we formulate the main computational task we study in this paper.   The task amounts to satisfiability, with respect to a one-letter string,
of a state formula in certain form. That is, the task is a particular case of  satisfying a state formula, cf. (\ref{eq:state-formulas}).
Importantly, we define  a computational problem specified by an algebraic term, as a class of relational structures. We then discuss how certain equivalence classes of Choice functions, that, intuitively,  correspond to traces of computation, can serve as certificates for such computational problems.

Recall that, in this paper, relation $\Tr$ is fixed, and is omitted from  $\Tr, \sstring \models \phi(\ch/\e)$  in (\ref{eq:state-formulas}). This state formula    $\phi$ will be of a special form, in the computational task we define next.

\vspace{3ex}

\noindent \fbox{\parbox{\dimexpr\linewidth-2\fboxsep-2\fboxrule\relax}{ Problem: \textbf{Main Task (Decision Version)} 
		
		\underline{Given:} A structure $\strA$ with an empty vocabulary and term $\term$. \\
		\underline{Question:} 
		\begin{equation}\label{eq:main-task}
			\exists \ch \	\strA \models | \term \rangle  \mT(\ch/\e)  \ ?
\end{equation}}}

\vspace{1ex} 

\noindent Intuitively,  when the answer is yes, (\ref{eq:main-task})
says that there is a successful execution of $\term$ at the input structure $\strA$, e.g., a graph, and  $\ch$ is a witness of it.

The study of the decision version of the computational task (\ref{eq:main-task}) is the main goal of this paper.
In particular, we are interested in the data complexity of this task, where the 
formula is fixed, and the input structures vary \cite{Vardi82}.\footnote{Notice that  fixing the relation $\Tr$ means fixing a \emph{formula} of the secondary logic. The size of the transition graph, i.e., the interpretation of the specification in the secondary logic,  depends on the size of the input structure.}

\vspace{1ex}

 Dually to this task, the complement of the problem  (\ref{eq:main-task})  can be defined, using the fact that $ \rneg \term $ is the complement of $ \rneg \rneg \term $, which is a syntactic variant of 
 $ | \term \rangle  \mT$. For instance, if a problem in the complexity class NP is specified using the, implicitly, existential, term $ \rneg \rneg \term $, then its complement, a problem in co-NP, is specified by the term $ \rneg \term$, with an implicit universal quantifier. The reader who is familiar with some elements of Descriptive Complexity (see, e.g., Immerman's book \cite{Immerman-book}), might notice 
 an analogy with the connection  between second-order quantifier alternations and the polynomial time hierarchy. 
 
 \vspace{1ex}

\noindent\textbf{Search Version} 
 The state formula $| \term \rangle  \Last(\e)$ may be seen as \emph{defining a set of Choice functions} with respect to the inputs structure $\strA$. 
 This gives us a \emph{search} version of the main task, which tells us to find all such  $\ch$. Recall that only one  free variable $\e$ is allowed per term.\footnote{There is an analogy with evaluating 
 	queries in database theory. 
 	A	formula $\phi(\bx)$ with free variables $\bx$ in classical logic is viewed as a query to a database (relational structure).  The query $\strA \models \phi(\bx)$ returns a set of tuples of domain element that, when instantiated for the free variables  $\bx$, make the formula true in the structure.}

\subsection{Computational Problem $\Pa$ Specified by Algebraic Term $\term$}\label{sec:term-computational-problem}

\begin{definition}\label{def:comp-problem}
	A \emph{computational problem  specified by a process-term $\term$} is an isomorphism-closed  class $\Pa$ of $\tau$-structures $\strA$ such that a structure $\strA$  is in this class if and only if 
	(\ref{eq:main-task}) holds, that is, there exists 
	$ \ch$ such that  
	$
	\strA \ \models \  |\term\rangle \Last\ (\ch/\e).
	$\footnote{For natural secondary logics such as the one given in Section \ref{sec:Atomic-Modules}, isomorphism-closure holds for all structures satisfying the Main Task \ref{eq:main-task}.}
\end{definition}
\noindent Here, we assume that, in $\strA$, all relational ``register'' symbols in $\tau_{\rm reg}$ are interpreted by a special ``blank'' element, and the interpretation  of $\tau_{\rm EDB}$  describes the input to the problem.

Intuitively, the class $\Pa$  contains all structures $\strA$ such that  a successful execution of $\term$ on input $\strA$ is possible, or $\term$ is \emph{defined} on input $\strA$.  We will talk more about defined and undefined terms in Section \ref{sec:Strong-Equalities}.

\vspace{1ex}

Many Choice functions witness the main task (\ref{eq:main-task}) in the ``same way''. We now explain this notion of similarity formally, by introducing equivalence classes on Choice function.

\subsection{Traces as Equivalence Classes of Choice Functions} \label{sec:traces}  

We have seen, in (\ref{eq:main-task}), that Choice functions witness the main computational task. 
But, we don't want to distinguish between  Choice functions that are, in some sense, ``similar'', as certificates. This similarity relates to the notion of a trace.
We say that string $\sstring$ is  a \emph{trace of term $\term$} from input $\strA$  if there is $\ch$ such that 
$(\strA \mapsto \sstring) \in \che(\term(\e))$. An example of a trace of  $\term \ := \ \module_{17}\comp \module_{17}\comp\module_{17}(\e)$  is highlighted in Figure \ref{fig:tree} by the shaded area.

For the same Choice function, many different terms can produce the same trace, say $\strA \cdot \strB \cdot \strC$, as Example \ref{ex:strings-as-functions} in Section \ref{sec:extension-of-Choice-functions}
 shows.
Dually, for the same term $\term$ and an input structure $\strA$, different Choice functions can generate the same trace, but act differently outside of it.
We consider two different Choice functions $\ch$ and $\ch'$ \emph{equivalent with respect to $\term$ and $\strA$,} if for that term,  they agree on a trace $\sstring$ of $\term$ from $\strA$, i.e., return the same mapping $\strA\mapsto\sstring$:
\begin{equation}\label{eq:equiv-relation}
	\ch \cong_{(\strA,\term)} \ch'  \ \ \Leftrightarrow \ \  (\strA\mapsto \sstring) \in \che(\term)  \mbox{ iff } (\strA\mapsto \sstring)\in \bar{\ch'}(\term), \mbox{ for some string } \sstring\in\Un^+.
\end{equation}
This equivalence relation gives us equivalence classes $[\ch]_{(\strA,\term)}$  of Choice functions.	
Each equivalence class has a one-to-one correspondence with  a trace   of $\term$ from $\strA$, so we can refer to traces as equivalence classes $[\ch]_{(\strA,\term)}$ (i.e., with a string $\sstring$ on which the Choice functions agree). 
We will see shortly that these equivalence classes (or traces) can be viewed as certificates for the membership of $\strA$ in  a computational problem specified by a term.

\begin{example}
	{\rm Consider a tree that represents the unwinding of the transition relation $\Tr$ from the input structure $\strA$ in Figure \ref{fig:tree}. For each branch from  $\strA$, that is a trace of some term $\term$, there could be  Choice functions that agree on that branch, but act differently (return different transductions for the same term $\term \ := \ \module_{17}\comp \module_{17}\comp\module_{17}(\e)$)  on nodes not in that branch.  
For example,  some other Choice function $g$ may agree with $\ch$ on the trace highlighted by the shaded area in the figure, but can return different mappings  for 
the same term, e.g., 
for the node $\strB$, which is not in that trace. So,  $\ch \cong_{(\strA,\term)} g$, but $\ch \not \cong_{(\strB,\term)} g$.
} 
\end{example}

\begin{example}
	{\rm For all atomic modules $\module \in \Modules$ and all structures $\strA \in \Un$, we have
$$
[\ch]_{(\strA,\module)} \ :=\ \{\ch \mid  \exists \strB\  (\strA,\strB) \in \ch(\module(\e))\}.
$$
} 
\end{example}

\begin{example}
	{\rm Interesting, Identity, a term counterpart of a tautology, generates only ``realistic'' promises, i.e., those on which at least one module is defined:
		$$
		[\ch]_{(\strA,\id)} \ :=\ \{\ch \mid \exists \module \in \Modules \text{ and } \exists \strB\  (\strA,\strB) \in \ch(\module(\e))\}.
		$$
	} 
\end{example}

\vspace{2ex}

\subsection{Promises (a.k.a. Witnesses or Certificates)}

We define the \emph{set of witnesses for $\strA$ in  $\Pa$} as 
$$
W^{\term}_{\strA} \ : =\ \{ [\ch]_{(\strA,\term)} \mid    \strA \ \models \  |\term\rangle \Last\ (\ch/\e)        \} \ \ \ \ \text{ and }\ \ \  W^{\term} \ : =\ \ \bigcup_{\strA \in \Pa} W^{\term}_{\strA}.
$$

We also call the witnesses ``promises'', which gives the name to the algebra (\ref{eq:algebra}) -- Promise Algebra. If a promise is given, the computation is guaranteed to succeed.

\subsection{Boolean Algebra of Promises}\label{sec:boolean-algebra-promises}

The intended use of the algebra is to study queries of the form (\ref{eq:main-task}). In that use, strings in $\Un^+$ are partitioned into those that are traces of some algebraic terms that represent programs, 
	and thus can potentially act as ``yes''-certificates to computational decision problems, and those that are not traces of any term whatsoever, and thus cannot be ``yes''-certificates for any $\Pa$. 
We can consider the Boolean algebra of the set of all potential ``yes'' certificates as follows. 

	\vspace{1ex}
	
	The Boolean algebra $B(A)$ of a set $A$ is the set of subsets of $A$ that can be obtained by means of a finite number of the set operations union (OR), intersection (AND), and complementation (NOT), see pages 185-186 of Comtet's 1974 book \cite{Comtet-book74}.

	\vspace{2ex}

 Let $\cH$ be the set of all promises, i.e., equivalence classes on functions from $\ChFunctions$, with respect to the equivalence relation (\ref{eq:equiv-relation}).
 
 \begin{definition}\label{def:Boolean-algebra-of-promises}
 The \emph{Boolean Algebra of Promises} is the  Boolean algebra  of the set 
$
\{(\sstring \mapsto \sstring) \mid \sstring \in \cH \}.
$
\end{definition}
\noindent For simplicity, since pairs $(\sstring \mapsto \sstring)$ always contain identical strings in $\cH$, we denote this Boolean algebra as $B(\cH)$.

\vspace{1ex}

If $ \sigma$ is a signature and   $A$, $B$ are  $\sigma$-structures (also called $ \sigma$-algebras in
 universal algebra), then a map 
 $h:A\to B$  is a \emph{$\sigma$-embedding} if all of the following hold:
\begin{itemize}
	\item $h$ is injective,
\item for every $n$-ary function symbol $f\in \sigma$ and  $a_{1},\ldots ,a_{n}\in A^{n}$, we have 
\begin{equation}\label{eq:embedding-mapping}
h(f^{A}(a_{1},\ldots ,a_{n}))=f^{B}(h(a_{1}),\ldots ,h(a_{n})).
\end{equation}
\end{itemize}
 The fact that a map $h:A\to B$ is an embedding is indicated by the use of a ``hooked arrow''  $h:A \hookrightarrow B$.\footnote {Note that this
 notation is sometimes also used for inclusion maps, but, here, we mean an embedding (which happens to be an inclusion).}

\begin{proposition}\label{prop:boolean-algebra}
Let $\sigma = \{\land,\lor,\neg\}$. There is a $\sigma$-embedding of the the Boolean algebra $B(\cH)$ into the Dynamic Logic  (\ref{eq:dynamic-logic}).

\end{proposition}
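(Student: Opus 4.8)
The plan is to exhibit $h$ as the map that sends each element of $B(\cH)$ to the state formula of the Dynamic Logic whose interpretation, as a test, realises exactly that set of identity pairs. The starting observation is that the two structures live in the same semantic space: by Definition \ref{def:test-process} (equivalently, cases 2, 3, 7, 8 of the extension $\che$) every state formula $\phi$ is interpreted as a subset of the identity relation on $\Un^+$, i.e.\ as a set of pairs $(\sstring \mapsto \sstring)$; and an element of $B(\cH)$ is, by Definition \ref{def:Boolean-algebra-of-promises}, precisely such a set, with $\sstring$ ranging over traces in $\cH$. I would therefore define $h$ by matching extensions: $h(X)$ is a state formula with $\che(h(X)) = X$. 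Since distinct elements of $B(\cH)$ are distinct subsets of the identity relation, and $h$ preserves extensions, injectivity of $h$ comes for free, giving one of the two embedding conditions immediately. The target is taken modulo the semantic (strong) equality of tests of Section \ref{sec:Strong-Equalities}, so that the idempotent Boolean laws actually hold in the image.

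The first genuine step is a representation lemma: every element of $B(\cH)$ is the extension of some state formula. For a single promise $[\ch]_{(\strA,\term)}$, which by the construction of Section \ref{sec:traces} corresponds to one trace string $\sstring$, I would build a characteristic test $\tau_{\sstring}$ with $\che(\tau_{\sstring}) = \{(\sstring \mapsto \sstring)\}$, using the diamond modalities $| \term \rangle \Last$ (whose extension is the domain of $\term$, cf.\ \eqref{eq:main-task}) together with the register comparisons $P = Q$ and $\BG(P \neq Q)$ to pin the history down to exactly $\sstring$; arbitrary elements are then reached by closing under the Dynamic Logic operators. Most of the bookkeeping sits here.

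Next I would verify that $h$ is a $\sigma$-homomorphism, i.e.\ that \eqref{eq:embedding-mapping} holds for $f \in \{\land,\lor,\neg\}$, by specialising the relevant clauses of $\che$ to tests. Clause 4 gives, for tests $\phi,\psi$, that $\che(\phi \land \psi) = \che(\phi \comp \psi) = \che(\phi) \cap \che(\psi)$ (the witness $l$ is forced to equal $i = j$); clause 5 gives $\che(\phi \lor \psi) = \che(\phi \sqcup \psi) = \che(\phi) \cup \che(\psi)$; and clause 3 gives $\che(\neg \phi) = \che(\rneg \phi)$ equal to the complement of $\che(\phi)$. Matching these with meet, join and complement in $B(\cH)$ yields $h(X \cap Y) = h(X) \land h(Y)$, $h(X \cup Y) = h(X) \lor h(Y)$ and $h(\compl{X}) = \neg\, h(X)$, so that $h$ is an embedding onto a Boolean subalgebra of the Dynamic Logic.

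The delicate point, and the step I expect to be the main obstacle, is making clause 3 (Anti-Domain) coincide with Boolean complement inside $B(\cH)$. For a test $\phi$, clause 3 says $\rneg\phi$ holds at $\sstring$ iff no extension $\che'$ of any Choice function $\ch'$ has $(\sstring \mapsto \sstring) \in \che'(\phi)$, so it complements over all of $\Un^+$ and, crucially, hides a universal second-order quantifier over Choice functions (the implicit $\forall\e$ of Section \ref{sec:implicit-quantification}). To match the complement $\cH \setminus S$ taken relative to the top element $\cH$, I must (a) relativise to a test $\top_{\cH}$ true exactly on trace strings and work with $\neg(\cdot)\land \top_{\cH}$, equivalently restrict evaluation to strings in $\cH$ as in the intended use flagged in Section \ref{sec:boolean-algebra-promises}; and (b) ensure the characteristic tests $\tau_{\sstring}$ are insensitive to the particular $\ch$, so that ``no $\ch'$ satisfies $\phi$ at $\sstring$'' is equivalent to ``$\sstring \notin S$''. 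Getting this interplay between the implicit quantification over Choice functions and Boolean complementation exactly right — so that $\rneg$ is a genuine complement on the image — is where I anticipate the real work; the intersection and union clauses, by contrast, are direct.
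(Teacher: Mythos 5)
Your route is genuinely different from the paper's, and the difference is exactly where the trouble lies. The paper's embedding never touches syntax: since an element of $B(\cH)$ is already a set of identity pairs $(\sstring\mapsto\sstring)$, and since state formulae are interpreted under (\ref{eq:state-formulas}) as precisely such sets, with $\land,\lor,\neg$ acting semantically, the paper takes $h$ to map every promise set \emph{to itself}; injectivity is immediate, and the homomorphism condition (\ref{eq:embedding-mapping}) is read off from the semantics, with both sides computed in the same universe of sets. You instead send $X\in B(\cH)$ to a \emph{formula} whose extension is $X$, so your map is well defined only if you first prove your representation lemma: every element of $B(\cH)$ --- in particular every singleton $\{(\sstring\mapsto\sstring)\}$ --- is the extension of some state formula. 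You defer this as ``bookkeeping,'' but it is the crux of your construction, and it is false in general: the entire formalism (SM-PP modules, the algebraic operations, the comparisons $P=Q$ and $\BG(P\neq Q)$) is isomorphism-invariant, so the extension of any state formula is closed under pointwise isomorphisms applied to strings; hence a singleton $\{(\sstring\mapsto\sstring)\}$ cannot be the extension of any test unless $\sstring$ has no nontrivial isomorphic copies. Your characteristic tests $\tau_{\sstring}$ therefore do not exist in general, and $h$ as you define it is not well defined.

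The second gap is the one you flag yourself but do not close. Anti-Domain complements over all of $\Un^+$ (it carries the implicit $\forall\e$ of the paper's Section \ref{sec:implicit-quantification}), whereas complement in $B(\cH)$ is relative to $\cH$. Your remedy --- working with $\neg(\cdot)\land\top_{\cH}$ --- replaces the target's native operation by a relativized one, so even if it were carried out it would not verify (\ref{eq:embedding-mapping}) for the signature $\sigma=\{\land,\lor,\neg\}$ as the proposition demands; moreover $\top_{\cH}$ itself would have to be definable, which is the same unproven representation problem again. The paper's identity-map formulation sidesteps this because the source and the image coincide as families of sets, so complementation on the image just is the complementation of $B(\cH)$. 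Your verification of $\land$ and $\lor$ via clauses 4 and 5 of $\che$ is correct and matches the paper's appeal to the semantics of state formulae, but the two load-bearing claims above are missing, and the first is unfixable as stated.
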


\begin{proof} 
		We  need a structure-preserving injective mapping $h:A \hookrightarrow B$, where 
	$A$ is the Boolean Algebra of Promises, $B(\cH)$, and 
$B$ is the Dynamic Logic. The elements $a_i \in A$ in (\ref{eq:embedding-mapping}) are, in our case, sets of partial  mappings of the form $(s\mapsto s)$.
		The mapping  $h:A \hookrightarrow B$ is such that all sets of maps $(s\mapsto s)$ of promises are mapped to themselves. The homomorphism property (\ref{eq:embedding-mapping}) clearly holds because, for 
$S, T \subseteq 	\{(\sstring \mapsto \sstring) \mid \sstring \in  \cH\}$, we have 
  $h(S\land^A T) = h(S)\land^B h(T)$, and similarly for 
	$\lor$ and $\neg$, by the semantics of state formulae (\ref{eq:state-formulas}). \end{proof}

The following proposition is straightforward, as it follows immediately from the semantics of terms as string-to-string transductions and the definition of the equivalence relation (\ref{eq:equiv-relation}).

\begin{proposition}
	The underlying set $\cH$ of the  Boolean algebra $B(\cH)$ has a forest structure, with a tree for each term.
\end{proposition}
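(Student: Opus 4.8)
The plan is to turn the informal phrase ``forest structure'' into the order-theoretic statement that $\cH$, partially ordered by prefixing of traces, is a forest, and then to read off the trees. By the one-to-one correspondence of Section~\ref{sec:traces}, I will identify each promise $[\ch]_{(\strA,\term)}\in\cH$ with the trace $\sstring\in\Un^+$ on which the functions in the class agree; recall that this $\sstring$ is precisely a branch, from the root $\emp$, of the unwinding of the transition relation $\sem{\cdot}$ pictured in Figure~\ref{fig:tree}, and that its first letter is the input $\strA$. Each element of $\cH$ therefore carries an intrinsic term-tag $\term$, since the equivalence relation~(\ref{eq:equiv-relation}) is itself indexed by $\term$. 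I then order $\cH$ by setting $p\preceq p'$ iff $p,p'$ have the same tag $\term$ and the trace of $p$ is a prefix of the trace of $p'$, declaring promises with distinct tags incomparable. This relation is not chosen arbitrarily: it is exactly the prefix relation $\sstring_i\sqsubseteq\sstring_j$ that the clauses for sequential composition and for the iterate in Section~\ref{sec:extension-of-Choice-functions} impose on the strings appearing in $\che(\term)$, so the order is forced by the semantics.

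The forest property is then immediate, which is why the statement is called straightforward. Reflexivity, antisymmetry and transitivity of $\preceq$ follow directly from those of the prefix relation on $\Un^*$ together with preservation of the tag. For the forest condition I would use the characterisation that a poset is a forest exactly when the strict down-set of every element is a chain. Fix $p\in\cH$ with trace $\sstring$; every strict predecessor of $p$ shares the tag of $p$ and has a trace that is a proper prefix of $\sstring$, and any two prefixes of the one fixed string $\sstring$ are comparable (the shorter is a prefix of the longer). Hence the predecessors of $p$ form a chain, which is the defining property of a forest. This argument uses nothing beyond the fact that the elements of $\cH$ are strings, so it would apply to any set of strings ordered by prefix.

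The last step, extracting \emph{one} tree \emph{per} $\term$, is where I expect the only real care to be needed, because the paper's own definition of a tree (Section~\ref{sec:Semantics}) asks for a nonempty \emph{prefix-closed} subset of $\Un^*$ rooted at $\emp$, whereas the complete traces of a single term can form an antichain---for instance all the length-four branches realised by $\module_{17}\comp\module_{17}\comp\module_{17}$ in Figure~\ref{fig:tree}. To meet that definition I would take, for each tag $\term$, the tree $\Tree_\term$ consisting of all prefixes of all traces of $\term$; this is exactly the restriction of the $\sem{\cdot}$-unwinding to the branches that $\term$ can realise, so it is prefix-closed by construction and has the common root $\emp$, and the traces themselves sit inside it as the maximal realised nodes that index the promises in $\cH$. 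Since every promise carries its term-tag, distinct terms contribute disjoint node sets, and $\cH$ is organised as the disjoint union $\bigsqcup_{\term}\Tree_\term$, i.e.\ a forest with one tree $\Tree_\term$ for each term $\term$. The single point to make explicit in the write-up is this passage from the complete traces (the elements literally in $\cH$) to their prefix-closure, so that the paper's notion of a tree is genuinely satisfied.
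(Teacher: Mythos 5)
Your proposal is correct and follows essentially the same route as the paper, which offers no detailed argument but simply declares the proposition straightforward from the semantics of terms as string-to-string transductions and the equivalence relation (\ref{eq:equiv-relation}) --- exactly the two ingredients (promises identified with traces, traces organised by the prefix structure of $\Un^*$, one tree per term-tag) that you spell out. Your explicit passage from the complete traces to their prefix-closure, so as to meet the paper's definition of a tree rooted at $\emp$, is a faithful elaboration of what the paper leaves implicit rather than a different approach.
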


\section{Strong Equality for Reasoning about Computation}
 \label{sec:Strong-Equalities}
In this section, we explain that the truth of certain (conditional) equalities between terms indicates the existence, or non-existence, of a ``yes'' certificate for a computational problem specified by a term.

Recall that terms are interpreted as \emph{partial} functions, see (\ref{eq:terms-partial-maps}).  This is crucial for reasoning about computations (cf. the Main Task (\ref{eq:main-task})), since programs are not everywhere defined. 
\subsection{Defined and Undefined Terms}
\begin{definition}\label{def:defined-undefined}
	We say that term $\term$ is \emph{defined in $\sstring_i$}, notation $\term(\sstring_i)\!\dn$, if there is $\ch\in \ChFunctions$ and  string $\sstring_j$ such that  $\term^{\Tr}(\ch/\e)(s_i) = s_j $, otherwise $\term$ is \emph{undefined in $\sstring_i$}, notation $\term(\sstring_i)\notdn$.  
\end{definition}
\noindent Intuitively, definedness is associated with the existence of a ``yes'' certificate, and undefinedness  is associated with the non-existence of a ``yes'' certificate of a computational problem.  	Thus, another way of viewing the computational problem $\Pa$ specified by term $\term$  is as an isomorphism-closed class of structures $\strA$ such that  $\term(\strA)\dn$.

\vspace{2ex}

\subsection{Trace Equivalence (a.k.a. Strong Equivalence)}

Let  the state-to-state transition relation $\Tr$ (given by the secondary logic), be fixed. Let $\sstring$ be a   string in  $\Un^+$.

\begin{definition}

Terms $\term$ and $g$ are
	\emph{strongly equivalent (or trace-equivalent) on string $\sstring\in \Un^+$}, notation $\term(\sstring)\traceeq  \termg(\sstring)$, if 
		\begin{enumerate}
		\item 
		they both are defined on $\sstring$, in symbols, $\term(\sstring)\dn$ and   $\termg(\sstring)\dn$, and 
		\item for all $\ch\in \ChFunctions$,  they denote the same mapping on $\sstring$, i.e., 
$
		\term^{\Tr}(\ch/\e)(\sstring) =  g^{\Tr}(\ch/\e)(\sstring).
$
		\end{enumerate}
We say that terms $\term$ and $g$ are
	\emph{strongly equal on a set $S \subseteq \Un^+$} if  for all strings $\sstring$ in $S$, we have that $ \term(\sstring)\traceeq  \termg(\sstring)$.
\end{definition}
	Notice that terms that are trace-equivalent on some set $S$  have the same  sets of deltas, extending strings in that set $S$.

\begin{example}
{\rm
	All tests that are vacuously true on all strings $\sstring\in \Un^+$ (are tautologies) are strongly equal to Identity $\id$ on the set of all strings $\Un^+$.
}
\end{example}	
\begin{example}\label{ex:three-negations}
{\rm Strong equivalence  $ \rneg \rneg \rneg \term(\sstring) \traceeq \rneg \term(\sstring)$ holds on the set of those strings where $\term$ is undefined (equivalently,  $\rneg \term$ is defined). The equivalence holds  because Anti-Domain of Domain (i.e., $ \rneg \rneg \term(\sstring)$) is Anti-Domain.  
}
\end{example}	
	\begin{example}\label{ex:two-negations}
		{\rm On the other hand, if $\term$ is a process-term, then there is no set $S$ on which  $ \rneg \rneg \term(\sstring) \traceeq  \term(\sstring)$. We leave it to the reader to figure out why this is the case (one has to construct a counterexample).
}
\end{example}	

Note that undefined terms are not considered (strongly) equal. Thus, \emph{strong equality is not reflexive}. This is an important property for reasoning about computation, as discussed shortly in Section \ref{sec:self-equality-definedness}.

		\vspace{2ex}

	\subsection{Before-After Equivalence  (a.k.a. Input-Output Equivalence)}
	
	To define the notion of a  Before-After equivalence of terms, we need the following notion of a projection onto $\Un$, which is simply the set of  all pairs (start, end) of  a term's deltas.
	
	A \emph{projection $\Proj$ of  term $\term$  under Choice function $\ch$ onto $\Un$} 
	is defined as:
	$$
	(\, \sstring(i),\sstring(j) \, ) \in \Proj(\term, \ch) \ \ \ \Leftrightarrow   \ \ \  (\sstring_i \mapsto \sstring_j ) \in  \che(\term).
	$$

	\begin{definition}\label{def:strong-equality}
		Terms $\term$ and $g$ are \emph{Before-After equivalent on $\sstring$}, notation $\term(\sstring)\ioeq \termg(\sstring)$, if 
		\begin{enumerate}
			\item  they both are defined on $\sstring$, in symbols, $\term(\sstring)\dn$ and   $\termg(\sstring)\dn$, and
			\item  for all $\ch\in\ChFunctions$, their projections onto $\Un$ coincide, i.e., 
	$
	\Proj(\term, \ch) = 	\Proj(\termg, \ch) .
 $		
		\end{enumerate}
	\end{definition}

	\vspace{2ex}

	Both equalities we have just introduced can be uses to specify the existence of a certificate, that invisible  elephant in the room, as we show next.
	
	\subsection{Strong Equality and Existence of a Certificate}\label{sec:self-equality-definedness}
	
	An important consequence of the definitions of Strong equality and Before-After equality is that, in both cases,  such an equality  is not necessarily reflexive. As a consequence,  
	\emph{self-equality  is identified with definedness}.  
	On the other hand, undefinedness, in $x$, denoted  $\term(x)\notdn$, is associated with  $\rneg \term(x)  \strongeq x$ (here, $\ioeq$ can be considered as well, depending on  specific application needs).
	
	 To summarize,
	$$
	\begin{array}{llll}
		\term(x)\! \dn &  \text{ abbreviates } &\term(x) \strongeq \term(x),&   \text{ a ``yes'' certificate for $x$ in $\Pa$ exists}, \\
		\term(x)\notdn & \text{ abbreviates } & \rneg \term(x) \strongeq  x, &  \text{ no ``yes'' certificates  for $x$  in $\Pa$  exist} .
	\end{array} 
	$$

		Since terms denote \emph{partial} mappings,  strong equalities such  as  $\term(x)   \strongeq  \termg(x)$ may hold for some strings $x$, but not for all $x$, as in Example \ref{ex:three-negations}. For that reason, the algebra (\ref{eq:algebra}) cannot be axiomatized by universally quantified equalities between terms. 
	Instead of equational theories that axiomatize algebraic structures such as groups,  we need to consider \emph{quasi-equational} theories, i.e., 
	those with \emph{conditional} equations in the Horn form. 
	
	The fact that the existence of a certificate is associated with equality of certain terms opens up a possibility  developing a proof system where \emph{deriving} the existence of a certificate is possible. Developing such a quasi-equational theory and a proof system is outside of the scope of this paper.

\section{ A Logic for Atomic Modules}\label{sec:Atomic-Modules}

In this section, first, we discuss the Law of Inertia, that must hold  regardless of what secondary logic is used for specifying atomic modules. Second, we define a specific secondary logic used further in this paper. The logic is based on a modification of Conjunctive Queries (CQs) -- only one element of the set generated by the CQ is, non-deterministically, returned on the output. Finally, we come back to the Law of Inertia for that specific secondary logic.

\subsection{Law of Inertia}

Recall that we have used the relation $ \semaNoh{\module(\e)}$ that represents a transition system in Definition \ref{def:Choice-function-atomic-modules} of the semantics of atomic modules, via Choice functions. We have, temporarily, treated it as an arbitrary binary relation. However, this relation has an important property that always holds for any update -- \emph{everything that is not explicitly modified in the update, must remain the same.} This property is commonly called the Law of Inertia,  starting from McCarthy and Hayes 1969  \cite{McCHay69}. 
Figure \ref{fig:Inertia} illustrates  the Law of Inertia for atomic modules. Recall that the vocabulary symbols $\tau$ are partitioned into EDB relations and ``registers'', $ \tau \ := \ \tau_{\rm EDB}   \uplus 
\tau_{\rm reg}$, see (\ref{eq:tau}).

	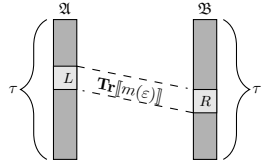
\begin{figure}[h]
		\centering
		\begin{tikzpicture}[scale=0.62,transform shape]
			
				\node (R1A) {};
				\path (R1A)+(0.5,-3) node (R1B) {};
				\draw [fill=black!30] (R1A) rectangle (R1B);
				
				\path (R1A)+(0,-1) node (R1sA) {};
				\path (R1sA)+(0.5,-0.5) node (R1sB) {};
				\draw [fill=black!10] (R1sA) rectangle (R1sB);
				
				\path (R1A)+(3,0) node (R2A) {};
				\path (R1B)+(3,0) node (R2B) {};
				\draw [fill=black!30] (R2A) rectangle (R2B);
				
				\path (R1sA)+(3,-0.5) node (R2eA) {};
				\path (R1sB)+(3,-0.5) node (R2eB) {};
				\draw [fill=black!10] (R2eA) rectangle (R2eB);
				
				\path (R1sA)+(0.5,0) edge[-,dashed] node[sloped,anchor=center,below]{$\semaNoh{\module(\e)}$} (R2eA);
				\path (R2eA)+(0,-0.5) edge[-,dashed] (R1sB);
				
				\path (R1sA)+(0.08,-0.25) node[right] {\small{$L$}};
				\path (R2eB)+(0,0.25) node[left] {\small{$R$}};
				
				\draw [decorate,decoration={brace,amplitude=10pt},xshift=-4pt,yshift=0pt] (0,-3) -- (0,0) node [black,midway,left,xshift=-0.5cm] {$\tau$};
				\draw [decorate,decoration={brace,amplitude=10pt},xshift=4pt,yshift=0pt] (3.5,0) -- (3.5,-3) node [black,midway,right,xshift=0.5cm] {$\tau$};
				
				\path (0.25,0) node [above] {$\strA$};
				\path (3.25,0) node [above] {$\strB$};
		\end{tikzpicture}
		\caption{\small  One of possible transitions  $(\strA,\strB) \in \semaNoh{\module(\e)}$  of an atomic module $m(\e)$. 
The module ``checks'' if the relation $L$ (or a set of such relations) is true in the $\tau$-structure $\strA$ on the left, and, if yes, it updates the value of the ``register'' $R$ in  the $\tau$-structure $\strB$ on the right.	The interpretation of all other relations 	in $\tau$ remains unchanged, by inertia.}\label{fig:Inertia}
	\end{figure}

\vspace{-2ex} 

\subsection{Atomic Modules as Vector Transformations}

We  restrict atomic transductions so that the only relations that change from structure to structure are \emph{unary} and \emph{contain one domain element at a time}, similarly to  registers in a Register Machine \cite{ShepherdsonSturgis:1961}. Please see Section \ref{sec:machine-model} for an explicit discussion of the machine model used.

Register values,  i.e., the content of unary singleton-set relations, get updated   by state-to-state transitions. Updating the registers can be thought of re-interpreting a fixed set (a vector) of $k$ constants,  interpreted by domain elements. Such an update can be specified in any way. For example, it can be a linear algebra transformation $\bb=L\ba$, or any other transformation $\bb=T(\ba,{\rm EDB})$, that takes into account the input EDB relations, in general.

 A vector $\ba=R_1 \cdots R_k$ of registers (one-dimensional array) can encode an $l$-dimensional matrix by providing appropriate  indices such as  $i,j,l$, e.g., $R_{i,j,l}$ for a 3-dimensional matrix. These indices correspond to keys $i,j,l$ in $([i,j,l]; \ R)$, a Graph Normal Form (GNF) representation of the 3-dimensional  matrix. 

Note that EDB relations given on the input are never updated. All computational work happens  in the registers.

\subsection{Intuitions for a Specific Secondary Logic}

As mentioned above, the bottom level of the algebra can be specified in any formalism. 
In the sections that follow, we specify a particular secondary logic for axiomatizing atomic transductions, that is similar to Conjunctive Queries (CQs).

 Intuitively, we take a \emph{monadic} primitive positive (M-PP) relation, i.e., 
 a unary relation definable by a unary CQ,
and output, arbitrarily, only 
 a \emph{single} element contained in the relation, 
instead of the whole relation. There could be several such CQs in the same modules, applied ot once. A formal definition will be given shortly.
 
 \vspace{1ex}
 
For readers familiar with Datalog (such a reader is invited to jump ahead to see the general form (\ref{eq:SM-PP-module})), we mention informally that, at the  bottom level  of the algebra, we have a set of Datalog-like ``programs'', one per each atomic module symbol $\module\in \Modules$. The ``programs''  are similar to Conjunctive Queries (non-recursive Datalog programs). The rules in such programs have a unary predicate symbol in the head of each rule. 
	Each (simultaneous) application of the rules  puts only \emph{one} domain  element into each unary IDB relation in the head, out of several possible.  This creates non-determinism in the atomic module applications.
The applications of the modules are controlled by algebraic expressions in (\ref{eq:algebra}) at the top level of the algebra.

This logic  is the second parameter that affects the expressive power of the language, in addition to the selection of the algebraic operations  presented earlier. The logic is carefully constructed to ensure complexity bounds presented towards the end of the paper.

\subsection{Preliminaries: Conjunctive Queries}\label{subsec:CQs}

We start our formal exposition  by reviewing queries, Conjunctive Queries(CQs), and PP-definable relations. 

Let $C$ be a class of relational structures of some vocabulary $\tau$.  Following Gurevich \cite{Gurevich-challenge}, we say that  an $r$-ary \emph{$C$-global relation $q$} (a \emph{query}) assigns to each structure $\strA$ in $C$ an $r$-ary
	relation $q(\strA)$ on $\strA$; the relation $q(\strA)$ is the specialization of $q$ to $\strA$. The vocabulary $\tau$ is
	the vocabulary  of $q$. If $C$ is the class of all  $\tau$-structures, we say that $q$ is
	\emph{$\tau$-global}. 
Let $\mathcal{L}$ be a logic. A $k$-ary query $q$ on $C$ is $\mathcal{L}$-definable if there is
an $\mathcal{L}$-formula $\psi(x_1, \dots , x_k)$ with $x_1, \dots , x_k$ as
free variables  such that for every $\strA \in  C$, 
\begin{equation}\label{eq:query-output}
q(\strA) = \{(a_1, \dots , a_k) \in \Domain^k \mid  \strA \models \psi(a_1, \dots , a_k)\}.
\end{equation}
In this case, $q(\strA)$ is called the \emph{output of $q$ on $\strA$}.
Query $q$ is \emph{unary} if $k=1$.
A \emph{conjunctive query} (CQ) is a query definable by a FO formula in prenex normal form built
from atomic formulas, $\land$, and $\exists$ only. A relation is   \emph{Primitive Positive (PP)} if it is definable by a C\textbf{\textbf{\textbf{A:}}}
$$
\forall x_1 \dots  \forall x_k \ \big( R(x_1, \dots  , x_k)  \leftrightarrow \underbrace{ \exists z_1 \dots \exists z_m\ (B_1 \land \cdots \land B_m)}_{\Phi(x_1, \dots  , x_k)} \big),
$$
and each atomic formula $B_i$ has object variables from 
$x_1, \dots  , x_k, z_1, \dots , z_m$.  We will use $\bx$, $\vu$, etc., to denote tuples of variables, and use the broadly used Datalog notation for the sentence above:
\begin{equation}\label{eq:CQ-Datalog}
	R(\bx) \leftarrow \underbrace{B_1(\vu_1), \dots, B_n(\vu_n)}_{\Phi_{\rm body}},
\end{equation}
where the part on the right of $\leftarrow$ is called the \emph{body}, and the part to the left of $\leftarrow$ the \emph{head} of the rule or the \emph{answer}. 
Notice that  an answer  symbol is always different from the symbols in the body  (\ref{eq:CQ-Datalog}), since, by definition, CQs are not recursive. 
We say that $R(x_1, \dots  , x_k)$ is  \emph{monadic} if $k=1$, and apply this term to the corresponding PP-relations as well.
\begin{example}\label{ex:Path2}
{\rm	Relation \textbf{Path of Length Two} is PP-definable, but not monadic: 
$P(x_1, x_2) \leftarrow E(x_1, z), E(z, x_2)$.
} 
\end{example}
\begin{example}\label{ex:Reach2}
	{\rm Relation	\textbf{At Distance Two  from $X$} is monadic and PP-definable:
$D(x_2) \leftarrow  X(x_1),E(x_1, z), E(z, x_2)$.
	} 
\end{example}
From now on, we assume that all queries are conjunctive, and their answers are monadic,  as in Example \ref{ex:Reach2}.

We now introduce a \emph{modification of CQs} we use as the secondary logic, the bottom level of our algebra.
Consider a relational vocabulary 
$
\tau \ := \ \tau_{\rm EDB}   \uplus 
 \tau_{\rm reg},
$
consisting of two disjoint sets of relational symbols. The symbols in $\tau_{\rm reg}$ are monadic, and $\rm reg$ abbreviates ``registers''.  Their interpretations change during a computation formalized by an algebraic term. The arities of the symbols of $\tau_{\rm EDB}$  are arbitrary. 
Their interpretations  are given by the input structure. 

\subsection{SM-PP Atomic Modules}

Let $	R_i'(x) \leftarrow  \Phi_{\rm body}$ be a unary CQ where $	R_i \in \tau_{\rm reg}$ and $\Phi_{\rm body}$ is in $\tau$. The reason for the renaming of $R_i$ by $R'_i$ is to specify $R_i$'s value in a successor state, without introducing recursion over $R_i$, which is not allowed in CQs.

\begin{definition}\label{def:SM_PP}\rm
	\emph{Singleton-set-Monadic  Primitive Positive ({\rm SM-PP}) relation} is a singleton-set relation $R'$  implicitly definable by:
\vspace{-1ex}	\begin{equation}\label{eq:SM-PP}
		\begin{array}{c}
			\forall x \ \big(  
		( \Phi_{\rm body}(x)\to 	R_i'(x) )\\
			\land \     \forall x \forall y (R_i'(x)\land R_i'(y) \to x=y) \big)\big).
		\end{array}
	\end{equation}
\end{definition}
\begin{notation}
	We use a rule-based notation for (\ref{eq:SM-PP}):
	\begin{equation}\label{eq:SM-PP-rule}
		R_i(x) \rul \Phi_{\rm body}.
	\end{equation}
\end{notation}
\noindent Notation ``$\rul$'' in (\ref{eq:SM-PP-rule}), unlike Datalog's ``$\leftarrow$'' in (\ref{eq:CQ-Datalog}), is used to emphasize that only one domain element is put into the relation in the head of the rule.

\begin{example}\label{ex:Reach2-rule}
{\rm	Suppose we want to put just one (arbitrary) element in extension of  $D$  denoting ``At Distance Two'' from Example \ref{ex:Reach2}. In the rule-based 
	syntax (\ref{eq:SM-PP-rule}):
	$$	D(x_2) \rul X(x_1), E(x_1, z) , E(z, x_2).$$
	
	\noindent	The defined relation is SM-PP. Since only one domain element, out of those at distance two from $X$, is put into $D$, there could be multiple outcomes, up to the size of the input domain.	
}
\end{example}

\begin{example}
{\rm	The same example ``At Distance Two'' can be formalized in a variant of Codd's relational algebra, as a  join of several relations, with a new operation $\pi^{\e}_{\rm attribute} $   of  \emph{Choice-projection}, where one element is put into the unary answer $D$. Here, the attribute is $x_2$.
	$$
D \ :=\ \pi^{\e}_{x_2}  \ 	(X \Join_{x_1} E \Join_{z} E).
	$$
}
\end{example}

\begin{example}{\rm A non-example of SM-PP relation is Path of Length Two  (Example \ref{ex:Path2}). The relation $P$  on the output of the atomic module is neither monadic, nor singleton-set.
	} 
\end{example}

\begin{definition}\label{def:SM-PP-module}
	An \emph{SM-PP module}  is a set of rules of the form (\ref{eq:SM-PP-rule}):
	\begin{equation}\label{eq:SM-PP-module}
	\module(\e) \ := \ \defin{ 
		R_1(x) \rul \Phi^1_{\rm body} \\
		\cdots \\
		R_k(x) \rul \Phi^k_{\rm body}
	},
\end{equation}
where $\module(\e)$ is a symbolic notation for the module, $\module \in \Modules$, and $\e$ is a free function variable ranging over Choice functions.
\end{definition}

\noindent In the rest of this paper, the modules are SM-PP, that is, they define SM-PP relations. We use modules to specify atomic transitions or \emph{actions}. 
Each module may update the interpretations of several registers simultaneously.  Thus, we allow limited parallel computations. The parallelism is essential --
this way we avoid using relations of arity greater than one, such as in the Same Generation example, explained in Section \ref{sec:Examples}.

\subsubsection{Inertia for SM-PP Modules}

We now formulate the Law of Inertia  for  SM-PP modules. Consider a module specification in the form (\ref{eq:SM-PP-module}).
We say that a value $a$ in a successor state $\strB$ of a register $R_i$ (where $a$ is a domain element) is \emph{forced by transition} $
(\strA,\strB) \in \semaNoh{\module(\e)}$  if, 
$$
\text{whenever } \strA \models \Phi^i[a/x], \text{we have that } a \in R_i^{\strB}.
$$
Here, $\Phi^i[a/x]$ is a formula 
in the secondary logic (here, a unary conjunctive query with $x$ being a free variable). 
These queries must be applied  in $\strA$ simultaneously,  in order for the update $
(\strA,\strB) \in \semaNoh{\module(\e)}$  to happen.

\begin{definition}\label{def:module-semantics}
The \emph{transitions specified by $\module(\e)$} is a binary relation $\semaNoh{\module(\e)} \subseteq \Un \times \Un$ such that, for all transitions  $(\strA,\strB) \in \semaNoh{\module(\e)}$, the values of all registers in $\strB$, forced by the transition, are as specified by the rules (\ref{eq:SM-PP-module}), and  
all other relations remain unchanged in the transition from $\strA$ to $\strB$.
\end{definition}

	\begin{definition}\label{def:Lo}
		By $\Lo$ we denote the algebra with a two-level syntax  that consists of the algebra defined in Section \ref{sec:Algebra},
		with SM-PP modules of the form (\ref{eq:SM-PP-module}), see Definition \ref{def:SM-PP-module}.
	\end{definition}

\subsection{Machine Model} \label{sec:machine-model} 

We can think of evaluations of algebraic expressions  as  computations  of Register machines starting from input $\strA$. 
The machines are reminiscent those of Shepherdson and Sturgis \cite{ShepherdsonSturgis:1961}.
Importantly, we are interested in \emph{isomorphism-invariant} computations of these machines, i.e., those that do not distinguish between isomorphic input  structures. 
Intuitively, we have: 

 \begin{compactitem}
	\item monadic ``registers'' -- predicates used during the computation, each containing only one domain element at a time; 
	
	\item  the ``real'' inputs, e.g., the edge relation $E(x,y)$ of an input graph, are of any arity;
	\item atomic transitions correspond to conditional assignments with a non-deterministic outcome;
	\item in each atomic step, only the registers of the previous state or the input structure are accessible;	
	\item  a concrete Choice function, depending on the history,
	chooses one of the possible outputs;

	\item computations are  controlled by  algebraic terms that, intuitively, represent programs.
\end{compactitem}

\vspace{1ex} 

In Section \ref{sec:Dynamic-Logic}, we saw that the main programming constructs are definable. Thus, programs for our machines are very close to standard Imperative programs. In Section \ref{sec:Examples}, we give examples of ``programming'' in this model.

\section{Examples}\label{sec:Examples}


We now give some cardinality, reachability examples, and examples with mixed propagations. 
We assume that the input structure $\strA$ is of combined vocabulary $
	\tau \ := \ \tau_{\rm EDB}   \uplus 
	\tau_{\rm reg}
	$, where $\strA|_{\tau_{\rm EDB}}$ is an input to a computational problem, e.g., a graph,
		and the ``registers'' in 	$\strA|_{\tau_{\rm reg}}$ are interpreted by a ``blank'' symbol. We use $\alpha$ with subscripts to denote algebraic terms, and add $\tau_{\rm EDB}$ symbols, e.g., $P$ and $Q$ in $\alpha_{\rm eq\_size}({P},{Q})$, to emphasize what is given on the input.

\subsection{Cardinality Examples}

\subsubsection{Size Four}

 \vspace{2mm}
 
 \noindent \fbox{\parbox{\dimexpr\linewidth-2\fboxsep-2\fboxrule\relax}{ Problem: \textbf{Size Four} $\alpha_4$
 		
 		Given: A structure $\strA$ with a  vocabulary symbol $adom$ denoting its active domain.\\
 		Question: Is $|adom^\strA|$ equal to 4?}}
 \smallskip 
 $$
 \GuessP(\e)  : =    \defin{  P(x) \rul adom(x) }.
 $$
 Here, we put an arbitrary element of the active domain into $P$. We specify guessing a new element by checking that the interpretation of $P$ now has never appeared in the trace of the program before:
 $$
 \begin{array}{l}
 \GuessNewP(\e)   : = 
\GuessP  \comp   \BG (P_{now} \not \eq P ). 
 \end{array}
 $$
 The problem Size Four is now specified as:
 $$
 \alpha_4:= {\GuessNewP\,}^4   \comp \rneg \GuessNewP,  
 $$
where the power four means that we execute the guessing procedure four times sequentially.
 The answer to the question $
 \strA \models | \alpha_4 \rangle \Last(\e),
 $
 is non-empty, i.e., it is possible to find a concrete Choice function to semantically instantiate $\e$, if and only if the input domain is of size four. 
 Obviously, such a program can be written for any natural number.

 \subsubsection{Same Size}
\vspace{2mm}
 
 \noindent \fbox{\parbox{\dimexpr\linewidth-2\fboxsep-2\fboxrule\relax}{ Problem: \textbf{Same Size} $\alpha_{\rm eq\_size}({P},{Q})$
 		
 		Given: Two unary relations $P$ and $Q$. \\
 		Question: Are $P$ and $Q$ of the same size?}}%
 \medskip

\noindent  We pick, simultaneously,  a pair of elements  from the two input sets, respectively: 
 $$
 \PickPQ(\e)  : = \defin{ \Pick_P(x ) \rul {P(x)}
 ,\ \ Pick_Q(x) \rul {Q(x)}
 }.
 $$
Store the selected elements temporarily:
$$
\Copy (\e)  : =    \defin{ P'(x ) \rul {\Pick_P(x )}, \ \
 Q'(x ) \rul {\Pick_Q(x )}
}.
$$
Here, the Choice variable $\e$ is not really necessary, since the module is deterministic. Next, we define a sub-procedure.
$$
\GuessNewPair  : = 
\big(\PickPQ  \comp   \BG (Pick_P \not \eq  P')  \comp \Pick_Q \not \eq  Q'  )\big) \comp   \Copy.
$$
In the sub-procedure above, we guess two new elements, one from each set, simultaneously.
The problem Same Size is now specified as:
$$
\alpha_{\rm eq\_size} ({P},{Q}):=  (\GuessNewPair) ^\iter \comp \rneg \PickPQ.
$$
The answer to the question $
\strA \models | \alpha_{\rm eq\_size} \rangle \Last(\e)
$
is non-empty, i.e., there is a Choice function witnessing $\e$, if and only if the extensions of predicate symbols in the input structure $\strA$ are of equal size.

 \subsubsection{EVEN}
 \vspace{2ex}

 \noindent \fbox{\parbox{\dimexpr\linewidth-2\fboxsep-2\fboxrule\relax}{ Problem: \textbf{EVEN} $\alpha_E $
 		
 		Given: A structure $\strA$ with a  vocabulary symbol $adom$ denoting its active domain.\\
 		Question: Is $|adom^\strA|$ even?}}
 \medskip 
 
 EVEN is PTIME computable, but is not expressible in Datalog, or any fixed point logic, unless a linear order on the domain elements is given. It is also known that Monadic Second-Order (MSO) logic over the empty vocabulary cannot express EVEN.

  We now show how to axiomatize it in our logic.
 We construct a 2-coloured path in the transition system by guessing new domain elements one-by-one, and using $E$ and $O$ as labels.\footnote{It is possible to use fewer register symbols, but we are not trying to be concise here.}
 To avoid infinite loops, we make  sure that the elements  never repeat.
 We define three atomic modules. For the deterministic ones, we omit the Epsilon variable:
 $$
 \begin{array}{lcl}
 	GuessP(\e) & : = &  \defin{  P(x) \rul adom(x) },
 	\\
 	CopyPO & : = & \defin{O(x ) \rul P(x)},\\
 	CopyPE & : = & \defin{ E(x ) \rul  P(x)}.
 \end{array}
 $$
 $$
 \begin{array}{l}
 	GuessNewO  : = \\
 \ \ \ \ \ \ \ \ \ \ \ \ \ 	\big(GuessP  \comp   \BG (P \not \eq  E ) \BG   (P \not \eq O ) \big)\ \comp \  CopyPO,\\
 	GuessNewE  : = \\
  \ \ \ \ \ \ \ \ \ \ \ \ \ 		\big(GuessP  \comp   \BG (P \not \eq E ) \BG   (P \not \eq O)\big)\ \comp \ CopyPE.
 \end{array}
 $$
 The problem EVEN is now formalized as:
$$ 
 \alpha_E:= (GuessNewO \comp  GuessNewE ) ^\iter \comp \rneg GuessNewO.
 $$
 The program is successfully executed if each chosen element is  different from any elements selected so far in the current information flow, and if $E$ and $O$ are guessed in alternation.
 Given a structure $\strA$ over an empty  vocabulary, the result of the query
 $
 \strA \models | \alpha_E \rangle \mT(\e)
 $
is non-empty whenever there is a successful execution of  $\alpha_E$, that is, the size of the input domain is even.

 \vspace{3mm}

 \subsection{Reachability Examples}

\subsection{s-t Connectivity}

 \vspace{2ex}
 
\noindent
\fbox{\parbox{\dimexpr\linewidth-2\fboxsep-2\fboxrule\relax}{ 
		Problem: \textbf{s-t-Connectivity} $\alpha({E},{S},{T})$\\
		Given: Binary relation $E$,  two constants $s$ and $t$, as singleton-set relations $S$ and $T$. \\
		Question: Is $t$ reachable from $s$ by following the edges?}}
\medskip  

\noindent To specify the term encoding this problem, we use the definable  constructs of imperative programming defined in Section \ref{sec:programming-constructs}:
$$
\begin{array}{l}
\alpha({E},{S},{T}) :=  \ \  M_{base\_case} \comp
{\bf repeat \ }   \big(   M_{ind\_case}  \comp \\  \BG ({\Reach' \not \eq \Reach }  ) \big)  \comp  \Copy  
   \ \ {\bf until } \ {\Reach \eq T} .
\end{array}
$$
Here, we  use a unary relational symbol (a register) $\Reach$. Initially, the corresponding relation contains the same node as $S$. The execution is terminated when $\Reach$ equals $T$. Register $\Reach'$ is used as a temporary storage.
To avoid guessing the same element multiple times, we use the $ \BG$ construct. 
The atomic modules used in this program are:
$$
\begin{array}{c}
\begin{array}{lcl}
M_{base\_case}(\e) & : = &  \  \defin{   \Reach(x ) \rul {S(x)}},
\end{array} \\
\begin{array}{l}
M_{ind\_case} (\e)  : =   \defin{   \Reach'(y ) \rul {\Reach(x)}, {E(x,y)}},
\end{array}\\
\begin{array}{l}
\Copy (\e) : = 
{\ \ \  }  \ \defin{  \ \Reach(x ) \rul {\Reach'(x)}}.
\end{array}
\end{array}
$$
Here, module $M_{ind\_case}$  is the only non-deterministic module. The other two modules are deterministic (i.e., the corresponding binary relation is a partial function).
Given structure $\strA$ over a vocabulary that matches the input (EDB) predicate symbols $E,S$ and $T$, including matching the arities,   by checking 
$
\strA \models |\alpha \rangle \Last(\e),
$
we verify that there is a successful execution of  $\alpha$. That is, $t$ is reachable from $s$ by following the edges of the input graph.

 \subsection{Same Generation} 
 
\vspace{2ex}

\noindent \fbox{\parbox{\dimexpr\linewidth-2\fboxsep-2\fboxrule\relax}{

		{Problem: \textbf{Same Generation}  $\alpha_{\rm SG}({E},{\Root}, {A},{B})$}
		
		Given: Tree -- edge  relation: $E$;  root:   $\Root$;  two nodes represented by unary singleton-set relations:   $A$ and $B$ 
		
		Question: Do $A$ and $B$ belong to the same generation in the tree? }} 

\medskip

\noindent Note that,  since we do not allow binary  ``register'' relations (binary EDB relations are allowed),  we need to capture the notion of being in the same generation through coexistence in the same structure. 
		$$
\hspace{-2ex}	M_{base\_case}(\e)   : =  \defin{ \Reach_A(x ) \rul {A(x)}, \   
			\Reach_B(x ) \rul {B(x)}
	}.
		$$

			Simultaneous propagation starting from the two nodes:
		$$
		\begin{array}{l}
		\hspace{-2mm} M_{ind\_case}(\e)   : =  
	\defin{  \Reach_A'(x ) \rul {\Reach_A(y)}, {E(x,y)},\\
			\Reach_B'(v ) \rul {\Reach_B(w)}, {E(v,w)} }.
		\end{array}
		$$
		 This atomic module
		specifies that, if elements $y$ and $w$, stored  in the interpretations of $\Reach_A$ and $\Reach_B$ respectively, coexisted in the previous state,  then $x$ and $v$ will coexist in the successor state. We copy the reached elements into ``buffer'' registers:
		$$
		\begin{array}{l}
		Copy(\e)   : =   
		\defin{ \Reach_A(x ) \rul {\Reach_A'(x )},\ \ \ \\
			\Reach_B(x ) \rul {\Reach_B'(x )}}.
		\end{array}
		$$
   The resulting interpretation of $\Reach_A$ and $\Reach_B$ coexist in one  structure, which is a state in a transition system.
The algebraic expression, using the definable imperative constructs, is:
$$
\hspace{-2mm}\begin{array}{l}
\alpha_{\rm SG}({ E},{ \Root}, { A},{ B})   :=  M_{base\_case};
   {\bf repeat } 
  M_{ind\_case} \comp \Copy ; \\
\ \ {\ \bf until } \ ({\Reach_A \eq {\Root}}\comp {\Reach_B \eq {\Root}}).
\end{array}
$$
While this expression looks like an imperative program, it really is a \emph{constraint} on all possible Choice functions, each  specifying a particular  sequence of choices. 
The answer to the question
$
\strA \models | \alpha_{\rm SG} \rangle \Last(\e)
$
is non-empty
if and only if $A$ and $B$ belong to the same generation in the tree.

\subsection{Linear Equations mod 2}

\vspace{2ex}

\noindent \fbox{\parbox{\dimexpr\linewidth-2\fboxsep-2\fboxrule\relax}{
		
		{Problem: \textbf{mod 2 Linear Equations   $\alpha_{\rm F}$}}

	Given: system $F$ of linear equations  mod 2 over vars $V$ given by two ternary relations $\Eq_0$ and $\Eq_1$ 
		
		Question: Is $F$ solvable? }}

\medskip

\noindent We assume that $V^{\strA}$ is a set, and $\Eq_0^{\strA}$ and $\Eq_1^{\strA}$ are  relations, 
both given by an input structure $\strA$ with $dom(\strA)=V^{\strA}$. Intuitively, $V^{\strA}$ is a set of variables, and $(v_1,v_2,v_3) \in \Eq_0^{\strA}$ iff $v_1\oplus v_2\oplus v_3 =0$, and $(v_1,v_2,v_3) \in \Eq_1^{\strA}$ iff $v_1\oplus v_2\oplus v_3 =1$. Such systems of equations are an example of constraint satisfaction problem that is not solvable by $k$-local consistency checks. This problem is known to be closely connected to the construction by Cai et al. \cite{CFI92}, and is not expressible in infinitary counting logic, as shown by Atserias, Bulatov and Dawar  \cite{AtseriasBD09}.  Yet, the problem is solvable in polynomial time by Gaussian elimination. We use the dynamic $\e$ operator to arbitrarily pick \emph{both} an equation (a tuple in one of the relations) and a variable (a domain element), on which Gaussian elimination $\Elim$ is performed.
$$
\begin{array}{l}
\alpha_{\rm F}({{ Eq_0}},{{ Eq_1}},{{ V}})   :=  M_{base\_case} \comp
 {\bf repeat } 
   \Pick\_\Eq\_V   \comp \Elim \\  

 \hspace{3cm}{ \bf until } \ \rneg \Pick\_\Eq\_V.
\end{array}
$$
Then, we have 
$\strA \models | \alpha_{\rm F} \rangle \Last(\e) 
$ returns a non-empty set  iff  $F$ is solvable.

\vspace{1ex}

\subsection{Observations}

\noindent\textbf{Two Types of Propagations}
In Reachability examples (s-t-Connectivity, Same Generation), propagations follow \emph{tuples} of domain elements given by the input structure, from one element to another. 
In Counting examples (Size 4, or any fixed size, Same Size, etc.),
propagations are made arbitrarily, they are \emph{unconstrained}. 
In Mixed examples (mod 2 equations, CFI graphs),
propagations are  \emph{of both kinds}, and they interleave.
Such examples with mixed propagation are not possible to formalize in Datalog, or any fixed point logic.  We belive that an \emph{interleaving} of constraned and unconstrained propagations (as in our logic)  is needed to formalize CFI-like examples. 
	Constrained propagations	are of a ``reachability'' kind, i.e., propagations over tuples.   Unconstrained propagations	 are  of a basic ``counting'' type, i.e., propagations from ``before'' to ``after'' via an unconstrained choice from the active domain. 
	We believe that the lack of this feature is the reason of why adding \emph{just} counting to FO(FP) is not enough to represent all properties in \PTIME \cite{CFI92} 
	--- e.g., the algorithm for mod 2 Linear Equations needs to interleave constrained and unconstrained propagations. Adding counting by itself to fixed points cannot accomplish it.

\vspace{2ex}

\noindent\textbf{Choice-Invariant Encodings}
In the given implementation of s-t-Connectivity, a wrong guess of a path is possible. However, one can write  a depth-first search algorithm, where the order of edge traversal does not matter. Because of this invariance, the depth-first encoding can be evaluated in \PTIME. This is because the length of the computation is limited to be polynomial in the size of the input structure (more about it in Section \ref{sec:length-choice-function}).
Choice-invariance would not hold for Hamiltoian Path, where, because it is an NP-complete problem, the possibility of a wrong guess always exists.

\section{Structural Operational Semantics}
\label{sec:SOS}

Our goal is to develop an algorithm that, given a Choice function $\ch$,  finds an answer to the  main task $ \strA \models |\alpha\rangle \mT(\ch/\e)$. We represent the algorithm as a set of rules in the style  of Plotkin's Structural Operational Semantics \cite{Plot81}. The transitions describe ``single steps'' of the computation, as in the computational semantics \cite{Henn90}.

\vspace{1ex}
	
		\noindent	\textbf{Identity (Diagonal)}	$\id$:
	$$
	\frac{true}{( \id, \sstring) \longrightarrow (\id, \sstring)}  .
	$$
				\noindent	\textbf{Atomic Modules}  $\module(\e)$: 
		$$
		\frac{true}{( \module(\e), \sstring \cdot\strA) \longrightarrow (\id, \sstring \cdot\strA\cdot\strB)}  \mbox{ if }    (\sstring \cdot\strA  \mapsto \sstring \cdot\strA\cdot\strB) \in \ch(\module(\e)).
		$$

\vspace{1ex}
		
		\noindent	\textbf{Sequential Composition} $\alpha \comp \beta$:
			$$
		\frac{(\alpha, \sstring ) \longrightarrow (\alpha',\sstring') }{(\alpha \comp \beta, \sstring) \longrightarrow
			(\alpha' \comp \beta, \sstring' ) }, \ \ \ \ \ \ \ \ \ 
		\frac{(\beta, \sstring ) \longrightarrow (\beta',\sstring')  }{(\id \comp \beta, \sstring ) \longrightarrow
			(\id \comp \beta', \sstring' ) }.	
		$$

		\noindent	\textbf{Preferential Union}  
			$\alpha \sqcup \beta$:
	$$
	\frac{(\alpha, \sstring ) \longrightarrow (\alpha',\sstring')   }{(\alpha \sqcup  \beta, \sstring ) \longrightarrow
		(\alpha' , \sstring') }.
	$$
	That is, $\alpha \sqcup  \beta$  evolves according to the instructions of  $\alpha$, if  $\alpha$ can successfully evolve to $\alpha'$. 
	$$
	\frac{(\beta, \sstring ) \longrightarrow (\beta',\sstring') \mbox { and  } (\rneg \alpha, \sstring ) \longrightarrow (\rneg \alpha,\sstring )    }{(\alpha \sqcup  \beta, \sstring ) \longrightarrow
		( \beta', \sstring') }  .
	$$
The rule says that $\alpha \sqcup  \beta$  evolves according to the instructions of  $\beta$, if $\beta$ can successfully evolve, while $\alpha$ cannot.		
		
	\vspace{1ex}	
		
			\noindent	\textbf{Right Negation (Anti-Domain)} $\rneg \alpha$:
	There are no one-step derivation rules for $\rneg \alpha$. Instead, we try, step-by-step, to derive $\alpha$, and, if not derivable, make the step.	
					$$
			\frac{true    }{(\rneg \alpha, \sstring) \longrightarrow ( \id,\sstring )  } \mbox{ if there is no Choice function $\ch'$ such that  derivation of $\alpha$ in $\sstring $ succeeds} .
			$$		
				\noindent	\textbf{Equality Check}	 	$(P\eq Q)$:
$$
\frac{true  }{( (P\eq Q), \sstring \cdot\strB) \longrightarrow
	(\id , \sstring \cdot\strB) } \mbox{ if }  P^{\strB} =  Q^{\strB}.
$$
 	
	\noindent	\textbf{Back Globally} $\BG(P\neq Q)$:

		$$
\frac{true  }{( \BG(P\neq Q), \sstring \cdot\strB) \longrightarrow
	(\id , \sstring \cdot\strB) } \mbox{ if for all } i, \  P^{\sstring(i)} \neq  Q^{\strB}.
$$
Here, $\sstring(i)$ is the $i$'th letter of $\sstring$, $\ffirst(\sstring)\leq i \leq \llast(\sstring)$.

	\noindent	\textbf{Maximum Iterate} $\alpha^\iter$:
		$$
	\frac{(\alpha, \sstring) \longrightarrow (\alpha',\sstring' )}{(\alpha^\iter, \sstring) \longrightarrow
		(\alpha' \comp \alpha^\iter, \sstring') }, \ \ \ \ \ \ \  \frac{true}{(\alpha^\iter, \sstring) \longrightarrow
		(\id, \sstring) } \mbox{ if $\rneg\alpha$ succeeds in $ \sstring$}  .
	$$
	Thus, $\alpha^\iter$ evolves according to $\alpha$, if $\alpha$ can evolve successfully, or if  $\rneg\alpha$ succeeds in $\sstring$.

\vspace{1ex}	
	
The execution of the algorithm consists of ``unwinding'' the term $\alpha$, starting with an input structure $\strA$.	The derivation process is deterministic: whenever there are two rules for a connective, only one of them is applicable. 
The goal of evaluation is to apply the rules of the structural
operational semantics starting from $(\alpha, \strA)$ and then
tracing the evolution of $\alpha$  to the ``empty'' program $\id$ step-by-step, by completing the branch upwards that justifies the step.
In that case, we say that 	the derivation  \emph{succeeds}. Otherwise, we say that it \emph{fails}.

\begin{proposition}
The evaluation algorithm based on the structural operational semantics, that   finds an answer to the  main task $ \strA \models |\alpha\rangle \mT(\ch/\e)$, 	is correct with respect to the semantics of $\Lo$.
\end{proposition}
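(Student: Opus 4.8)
The plan is to prove an \emph{adequacy} result that ties the single-step transition relation $\longrightarrow$ of the operational semantics to the denotational partial maps $\che$ of Section \ref{sec:extension-of-Choice-functions}. Write $\longrightarrow^{*}$ for the reflexive-transitive closure of $\longrightarrow$, and say that the derivation from a configuration $(\alpha,\sstring_i)$ \emph{succeeds with result} $\sstring_j$ if $(\alpha,\sstring_i)\longrightarrow^{*}(\id,\sstring_j)$, i.e.\ the program component is driven to the empty program $\id$ (up to trailing identities produced by the composition rules). The whole statement reduces to the following central lemma, which I would prove first: for every term $\alpha$, every concrete Choice function $\ch$, and every string $\sstring_i$,
\begin{equation*}
(\alpha,\sstring_i)\longrightarrow^{*}(\id,\sstring_j) \quad\Longleftrightarrow\quad (\sstring_i\mapsto\sstring_j)\in\che(\alpha).
\end{equation*}
Granting the lemma, correctness is immediate: the derivation from $(\alpha,\strA)$ succeeds exactly when $\che(\alpha)$ is defined at $\strA$, i.e.\ exactly when the supplied $\ch$ witnesses definedness $\alpha(\strA)\!\dn$ (Definition \ref{def:defined-undefined}); and since $|\alpha\rangle\mT:=\rneg\rneg(\alpha\comp\id)$ denotes the domain of $\alpha$ (by clauses 2, 3 and 4 of $\che$ together with (\ref{eq:state-formulas})), this is precisely the condition certifying the Main Task (\ref{eq:main-task}). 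The determinism of the rule system noted after the rules guarantees the algorithm returns a unique verdict, so ``succeeds/fails'' is well defined, and the existential task holds iff the algorithm succeeds for some $\ch$.

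The lemma is proved by induction on the structure of $\alpha$, but with one crucial strengthening: \emph{the statement is universally quantified over all Choice functions}, not just the fixed $\ch$ carried by the atomic rule. This is forced by the Anti-Domain case. The base cases are routine transcriptions of the matching clause of $\che$: the rule for $\id$ matches clause 2; the atomic rule $(\module(\e),\sstring\cdot\strA)\longrightarrow(\id,\sstring\cdot\strA\cdot\strB)$ matches clause 1 via $(\ldots\mapsto\ldots)\in\ch(\module(\e))$; and the side conditions of the Equality and Back-Globally rules reproduce clauses 7 and 8 verbatim. For Sequential Composition, the two composition rules run the left operand to $\id$ and then the right operand, so applying the induction hypothesis first to $\alpha$ and then to $\beta$ recovers the intermediate $\sstring_l$ demanded by clause 4. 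For Preferential Union, the first rule handles the case where $\alpha$ has a successful derivation (first disjunct of clause 5), while the second rule fires only when $\rneg\alpha$ succeeds; here the \emph{universally-quantified} hypothesis is exactly what certifies the ``there is no $\che'$'' side of the second disjunct, after which the hypothesis on $\beta$ finishes the case.

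The Anti-Domain case $\rneg\alpha$ is where the universal hypothesis is indispensable: its single rule fires precisely when \emph{no} Choice function $\ch'$ yields a successful derivation of $\alpha$ from $\sstring$, and by the hypothesis ranging over all $\ch'$ this coincides with the nonexistence clause defining $\che(\rneg\alpha)$ in clause 3. I expect the Maximum Iterate case $\alpha^\iter$ to be the main obstacle. Operationally it unwinds $\alpha^\iter\longrightarrow\alpha\comp\alpha^\iter$ as long as $\alpha$ can step, and halts through the second rule exactly when $\rneg\alpha$ succeeds; matching this to clause 6 needs an \emph{inner} induction on the number of unwindings. Two points must be reconciled. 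First, clause 6 is left-recursive (an $\alpha^\iter$-prefix followed by one $\che(\alpha)$-step), whereas the operational rule peels off one $\alpha$-step first; proving these two presentations agree is the technical crux, and I would do it by an auxiliary length induction showing each unwinding appends exactly one $\alpha$-delta along the deterministic path fixed by $\ch$. Second, and most delicate, the semantics imposes the no-loop side condition (2) of clause 6, which the operational rules do not check explicitly; I would argue that, under the fixed $\ch$, the non-repetition constraint bounds the path length, so the unwinding terminates precisely at the maximal loop-free prefix and returns the unique longest iterate, in contrast to the non-deterministic Kleene star (cf.\ Figures \ref{fig:Kleene-star} and \ref{fig:maximum-iterate}). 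Verifying that operational termination coincides with the semantic no-loop cutoff is the heart of the argument; once the iterate case is settled, assembling all cases completes the induction and hence the proposition.
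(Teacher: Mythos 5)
Your proposal is correct and takes essentially the same route as the paper: the paper's proof is a one-sentence outline stating that correctness follows by induction on the structure of the algebraic expression because the SOS rules simply implement the semantics, which is exactly the structural induction you carry out. Your adequacy lemma, the strengthening of the induction hypothesis to range over all Choice functions (needed for Anti-Domain and Preferential Union), and the reconciliation of the left-recursive clause for Maximum Iterate with its operational unwinding are refinements the paper leaves implicit, but they elaborate rather than depart from its argument.
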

\begin{proof}(outline)
The correctness of the algorithm follows by  induction on the structure of the algebraic expression, since the rules simply implement the semantics given earlier. 
\end{proof}

\section{Complexity of Query Evaluation}
\label{sec:Complexity}

In this section, we discuss the complexity of query evaluation in logic $\Lo$, specifically focusing on its fragment that captures precisely the complexity class NP.

	\subsection{The Length of a Choice Function} 
	\label{sec:length-choice-function}

Recall that we are interested in data complexity, where the formula is fixed. For an iterate-free term (without Maximum Iterate) and a fixed Choice function $\ch$, the main task is clearly in \PTIME.  But, we want to understand the data complexity of query evaluation for more general terms, under some natural restrictions.
	
	Recall that, according to (\ref{eq:terms-partial-maps}), terms are mapped to \emph{partial} functions on strings. Such a function can be continuously defined for a consecutive sequence of structures in $\Un$, but undefined afterwards. 
	
	By the \emph{length} of a Choice function we mean the maximal length of an ``extension'' string, i.e., a delta, that corresponds to the  mapping:
	$$
	length(\ch) \ : = \ \max \{ j-i \mid  (\sstring_{i} \mapsto \sstring_{j}) \in \ch(\module(\e))   \text{ for some } \module(\e)\}.
	$$
	Intuitively, it corresponds to 
	the height of the tree in Figure  \ref{fig:tree}.

	\begin{definition}
		We say that a Choice function $\ch$ is \emph{polynomial} if 
		$
		length(\ch) \in O(n^k),
		$
		where $n$ is the size of the domain of $\strA \in \Un$, and $k$ is some constant.
	\end{definition}
	\noindent 
Since Choice functions are certificates, their length has to be restricted. For our complexity results, we will restrict Choice functions to be polynomial.

\subsection{Data Complexity of Query Evaluation}

 \emph{Data complexity} refers to the case where the formula is fixed, and input structures $\strA$  vary \cite{Vardi82}. 
Here, we assume that the description of the transition system, specified in the secondary logic SM-PP (\ref{eq:SM-PP}), is a part of the formula in the Main Task (\ref{eq:main-task}), and, therefore, is fixed.
 
 We use the algorithm based on the structural operational semantics from Section \ref{sec:SOS} to analyze the data complexity of the main query $ \strA \models |\alpha\rangle \mT(\e)$.  The complexity depends on the nesting of the implicit quantifiers on Choice functions (cf. \ref{eq:quantifiers}), i.e., 
 how exactly $\rneg$ is applied in the term,  including as part of the evaluation of Maximum Iterate.

 Since the implicit (existential and universal) quantifiers can alternate, a problem at any level of the Polynomial Time Hierarchy can be expressed. 
Thus, the upper bound is PSPACE in general.

We can restrict the language so that $\rneg$  is applied to atomic modules only, including in Maximum Iterate, and there is no other application of negation, except the double negation needed to define the modality (the domain of a term).
For this restricted language, we can guarantee computations in NP because, in that case,  the size of the certificate we guess is polynomial, and it can verified in deterministic polynomial time.

	\begin{theorem} \label{th:membership-PTIME}
		The data complexity of checking $ \strA \models |\alpha\rangle \mT(\e)$ for a restriction of logic $\Lo$,  where negation applies to atomic modules only, is in \emph{NP}.
	\end{theorem}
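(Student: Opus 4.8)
The plan is to place the problem in \textbf{NP} by the standard guess-and-verify strategy, using the polynomial bound on Choice functions together with the structural operational semantics of Section \ref{sec:SOS} as the deterministic verifier. Because the implicit quantifier structure (\ref{eq:quantifiers}) is what controls the complexity, the whole argument hinges on showing that, under the restriction, the only existential content can be absorbed into a short certificate while every universal check stays local and polynomial.

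First I would fix the certificate. A ``yes'' instance is witnessed, by the Main Task (\ref{eq:main-task}), by some $\ch$ with $\strA \models |\alpha\rangle\mT(\ch/\e)$; by the standing assumption $\ch$ is polynomial, so the relevant trace $\sstring = \strA\cdot\strA_1\cdots\strA_m$ has $m=\mathit{length}(\ch)\in O(n^k)$. Each letter $\strA_i\in\Un$ shares the EDB part of the input $\strA$ (EDB relations are never updated) and differs from its predecessor only in the register symbols $\tau_{\rm reg}$, each of which is a singleton unary relation holding one domain element; hence $\strA_i$ is encodable in $O(|\tau_{\rm reg}|\log n)$ bits, and the whole trace, together with the finitely many branch and iteration choices made along the fixed term $\alpha$, forms a certificate of size $\poly(n)$. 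This is what I would guess.

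Second, the verifier runs the deterministic derivation of Section \ref{sec:SOS} on $(\alpha,\strA)$ driven by the guessed trace, accepting iff $\alpha$ rewrites to $\id$. I would check that each derivation step is polynomial-time decidable: a forward atomic step $(\strA_i,\strA_{i+1})\in\sem{\module(\e)}$ reduces, for an SM-PP module (\ref{eq:SM-PP-module}), to evaluating the unary conjunctive-query bodies $\Phi^i_{\rm body}$ in $\strA_i$ and checking inertia on the remaining symbols, which is in \PTIME; the tests $(P\eq Q)$ and $\BG(P\neq Q)$ are decided by comparing register values, respectively by a linear scan back through $\sstring$; and an atomic anti-domain step $\rneg\module$ succeeds exactly when no outgoing $\module$-transition exists, i.e.\ when the query bodies are unsatisfiable in the current structure --- again a \PTIME\ emptiness test. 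Since every operational step either consumes a (fixed-size) term constructor or extends the string, and the string length is bounded by $\mathit{length}(\ch)=O(n^k)$, the total number of steps, hence the running time, is $\poly(n)$.

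The delicate point --- the step I expect to be the main obstacle --- is the \emph{negative} obligations generated by the rules for Preferential Union and for the termination clause of Maximum Iterate: taking the right branch of $\alpha\sqcup\beta$ requires $\rneg\alpha$ to succeed, and exiting $\gamma^\iter$ requires $\rneg\gamma$ to succeed, and by case 3 of the semantics these are \emph{universal} over all Choice functions, a priori a co-\textbf{NP}-hard check that would push the problem up the polynomial hierarchy (indeed the unrestricted language yields only the PSPACE bound noted before the theorem). Here I would invoke the hypothesis that $\rneg$ is applied to atomic modules only, arguing by induction on the structure of the restricted term that every anti-domain obligation arising in the derivation is the anti-domain of an atomic module or of a test built from $\rneg\module$, $(P\eq Q)$, $\BG(P\neq Q)$ combined by $\comp$ and $\sqcup$ (the definable $\land,\lor$ on tests). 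Each such obligation is decided on the current structure/trace without any further guessing --- atomic anti-domain is the \PTIME\ query-emptiness test above, and a test evaluates in \PTIME\ by (\ref{eq:state-formulas}) --- so there is no alternation of the implicit quantifiers (\ref{eq:quantifiers}): the existential content (the Choice function, the outcomes inside modules, and the $\sqcup$/$^\iter$ branch choices) is entirely captured by the guessed certificate, while every universal check is local and polynomial. Combining the $\poly(n)$ certificate with the $\poly(n)$ verifier gives membership in \textbf{NP}.
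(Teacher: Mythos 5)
Your proposal is correct and follows essentially the same route as the paper's proof: guess a polynomial-length certificate (a trace, i.e.\ an equivalence class $[\ch]_{(\strA,\alpha)}$ of Choice functions) to instantiate $\e$, then verify it deterministically in polynomial time by running the structural operational semantics of Section~\ref{sec:SOS}, arguing by induction on the structure of the fixed term. The only difference is one of detail, not of method: you make explicit why the restriction of $\rneg$ to atomic modules is what keeps the negative obligations (from $\sqcup$ and the exit rule of $^\iter$) polynomial-time checkable rather than co-NP universal checks, a point the paper's outline subsumes under the phrase that ``all operations, including negation, can be evaluated in polynomial time.''
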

	\begin{proof}(outline) We guess a certificate, which is an equivalence class $[\ch]_{(\strA,\alpha)}$  of  Choice functions (a trace, see (\ref{eq:equiv-relation})), of a polynomial length in the size of $\strA$, to instantiate the free function variable $\e$.  	With such an instantiation, term $\alpha$ becomes deterministic.  SM-PP atomic modules (essentially, conjunctive queries) can be checked, with respect to this certificate, in \PTIME.  
		We argue, by induction on the structure of an algebraic term, using Structural Operational Semantics from Section \ref{sec:SOS}, that all operations, including negation, can be evaluated in polynomial time. We take into account that 
	 choice functions are restricted to be of polynomial length, and  the term is fixed. Moreover, the semantics of Maximum Iterate does not allow loops in the transition system $\sem{\cdot}$.
		Thus, we return ``yes'' in polynomial time if the witness  $[\ch]_{(\strA,\alpha)}$ proves that the answer to $ \strA \models |\alpha\rangle \mT(\ch/\e)$ is ``yes''; or ``no'' in polynomial time otherwise. 
	\end{proof}

\subsection{Simulating NP-time Turing Machines}

The complexity class NP has been in the centre of theoretical computer science research for a long time. 
Its \emph{logical} characterization  was given by Fagin. His celebrated  theorem \cite{Fagin:1974} states that the complexity class NP coincides, in a precise sense, with second-order existential logic.

We now prove a counterpart of the celebrated Fagin's theorem \cite{Fagin:1974} for a \emph{fragment} of  our logic $\Lo$.   Recall that logic $\Lo$ is introduced in Definition \ref{def:Lo}. 
This fragment is existential, in the sense discussed in Section \ref{sec:implicit-quantification}.

In this section, we demonstrate that our logic is strong enough to encode any polynomial-time Turing machine over unordered structures. 
Together with Theorem \ref{th:membership-PTIME}, these two properties show that the fragment precisely \emph{captures} NP.

\begin{theorem}
For every NP-recognizable class $\cK$ of structures, there is a sentence of logic $\Lo$ (where negation applies to atomic modules only)  whose models are exactly $\cK$.
\end{theorem}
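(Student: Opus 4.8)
The plan is to prove the ``completeness'' half of the capture result, so that together with Theorem~\ref{th:membership-PTIME} the existential fragment captures NP exactly. I would start from Fagin's theorem~\cite{Fagin:1974}: a class $\cK$ is NP-recognizable iff it is the class of models of an existential second-order sentence $\exists \bar{S}\,\phi$, where $\phi$ is first-order over $\tau_{\rm EDB}\uplus\{\bar S\}$ and $\bar S$ is a tuple of relation symbols of fixed arities. My target is a single process-term $\alpha_{\cK}$ of $\Lo$, using $\rneg$ on atomic modules only, such that for every input structure $\strA$ (with registers blank), $\strA\models |\alpha_{\cK}\rangle\mT(\e)$ holds for some Choice function iff $\strA$ satisfies $\exists\bar S\,\phi$, i.e. iff $\strA\in\cK$. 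The term is the sequential composition $\alpha_{\cK} := \alpha_{\rm order}\comp\alpha_{\rm guess}\comp\alpha_{\rm verify}$ of three phases, mirroring the quantifier structure of the ESO sentence.

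\textbf{The three phases.} First, $\alpha_{\rm order}$ reconstructs a linear order on the active domain by repeatedly guessing a fresh element, exactly as in the EVEN and Same~Size examples: a Maximum Iterate of $\GuessP\comp\BG(P\neq P)$-style steps enumerates every element once, its position in the trace serving as the order and, via $k$-tuples, as an addressing scheme for $0,\dots,n^k-1$. Second, $\alpha_{\rm guess}$ nondeterministically fixes the second-order witnesses $\bar S$: a nested Maximum Iterate walks over all tuples $\bar a$ of the appropriate arity (in the induced order) and, for each, lets $\e$ decide membership, recording the pair $(\bar a,\text{bit})$ as one letter of the trace. Third, $\alpha_{\rm verify}$ evaluates the fixed sentence $\phi$ on $(\strA,\bar S)$ by a bounded family of domain-walking loops that compute the truth value of each subformula into auxiliary monadic registers; look-backs recover the recorded value of $S(\bar b)$, EDB atoms are read directly, and all Boolean connectives of $\phi$ are realized inside atomic modules (which, being conditional assignments, may freely complement a flag). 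The only process-level negations are $\rneg$ applied to atomic modules for loop termination, together with the double negation in the outer modality $|\cdot\rangle\mT=\rneg\rneg(\cdot\comp\mT)$, so the whole term remains in the existential fragment.

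\textbf{Correctness, length, and the main obstacle.} Correctness follows by matching the three phases to the quantifier prefix $\exists\,{<}\,\exists\bar S\,\phi$: a successful $\e$ exists iff some order, some $\bar S$, and some evaluation make $\phi$ true, which by Fagin is equivalent to $\strA\in\cK$; isomorphism-invariance is automatic because the statement $\exists\e\,(\strA\models|\alpha_{\cK}\rangle\mT(\e))$ refers to no external order. The phases have trace lengths $O(n)$, $O(n^{\ar(\bar{S})})$, and $O(n^{c})$, all polynomial, so the witnessing Choice function is polynomial and the hypothesis of Theorem~\ref{th:membership-PTIME} is met. The main obstacle I anticipate is the representation and retrieval of the polynomially large witness $\bar S$, since registers hold only a single domain element: this forces me to store $\bar S$ in the trace and to read off the bit of an arbitrary tuple $\bar b$ by a \emph{simultaneous} $k$-coordinate look-back rather than the single-coordinate $\BG$ of the grammar. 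I would resolve this by defining, for the fixed constant arity $k$, a derived multi-coordinate Back-Globally construct that matches all coordinates of $\bar b$ against a recorded decision step; equivalently, one can simulate the nondeterministic machine's computation tableau row by row, so that every retrieval reaches a previous row at a successor-bounded offset on the guessed order and only a local window of the current configuration need reside in registers. Checking that this retrieval is both definable within the fragment and consistent with the Law of Inertia for SM-PP modules is the delicate point, and is where most of the work will go.
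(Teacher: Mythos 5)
Your top-level route is genuinely different from the paper's: you reduce to Fagin's theorem and build a guess-and-verify term for an ESO sentence $\exists \bar{S}\,\phi$, whereas the paper directly simulates a polynomial-time machine on a guessed order, via a term of the shape ${\rm ORDER} \comp {\rm START} \comp {\bf repeat\ } {\rm STEP}\ {\bf until\ } {\rm END}$, with the encoding phase modelled on Gr\"{a}del's $\beta_{\sigma}(\bar{a})$. Your order-guessing phase is essentially identical to the paper's ORDER, and that part is sound. The problem is that your route has a genuine gap exactly where you yourself locate ``most of the work'': retrieving the guessed witness $\bar{S}$ during verification.

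Concretely: since monadic NP is strictly weaker than NP, you cannot take $\bar{S}$ monadic, so you must store and later read back membership bits for tuples of fixed arity $k \geq 2$. But the only past-accessing constructs of $\Lo$ are the current-state test $P=Q$ and $\BG(P \neq Q)$, which is a \emph{universally quantified, single-pair inequality} over the entire history. In the fragment you must stay in (negation on atomic modules only), $\BG(P\neq Q)$ cannot be negated --- it is a primitive, not an atomic module --- so no existential look-back (``find the earlier step where $\bar{b}$ was recorded'') is available; and the composition $\BG(P_1 \neq Q_1) \comp \cdots \comp \BG(P_k \neq Q_k)$ expresses ``at every earlier step coordinate $1$ differs, and \dots, and at every earlier step coordinate $k$ differs,'' which is strictly stronger than the needed ``at every earlier step the \emph{tuples} differ in some coordinate.'' So the ``derived multi-coordinate Back-Globally construct'' you invoke is not definable by the means you describe, and you give no construction of it. Your fallback --- simulate the tableau row by row with ``successor-bounded offset'' retrieval --- rests on a false premise: in any linearization of the tableau, the cells of the previous configuration that a new cell depends on lie $\sim n^k$ trace positions back, and a whole configuration ($\Omega(n^k)$ bits) cannot be carried forward in constantly many single-element registers; moreover $\Lo$ has no ``look back exactly $j$ steps'' construct, so bounded offsets would not help even if you had them. (A lesser issue of the same kind: SM-PP bodies are positive conjunctive queries, so ``complementing a flag inside an atomic module'' also secretly needs the term-level $\rneg$ on modules.) The paper sidesteps content-addressed retrieval by arranging the simulation positionally: time stamps and cell addresses are $k$-tuples of registers that \emph{coexist in the same structure} as the cell content, and the STEP modules relate structures through lexicographic successor/predecessor on those tuples. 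That machinery, or a worked-out substitute for it, is precisely what your proposal is missing; without it the verify phase cannot read $\bar{S}$, and the construction does not go through.
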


\begin{proof}  We focus on the  query 	$\strA \models | \alpha_{\rm TM} \rangle  \mT(\e) $
from  Section \ref{sec:Dynamic-Logic} and outline such a construction. 
The main idea is that a linear order on the domain elements of $\strA$ is \emph{induced} by a path in a transition system that corresponds to a guessed Choice function $\ch$. In this path, we \emph{guess} new elements one by one, as in the examples. The linear order corresponds to an order on the tape of a Turing machine. After such an order is guessed, a deterministic computation, following the path, proceeds for that specific order. 
We assume, without loss of generality that the deterministic machine runs for $n^k$ steps, where $n$ is the size of the domain. The program is of the form:
$$
\alpha_{\rm TM} := 
 {\rm ORDER} \ \comp \ {\rm START}\  \comp \
  {\bf repeat \ } 
   {\rm STEP}  {\ \bf until } \ {\rm END}.
$$ 

Procedure ORDER: Guessing an order is perhaps the most important part of our construction. 
We use a secondary numeric domain with a linear ordering, and guess elements one-by-one, using a concrete Choice function $\ch$. 
We associate an element of 
the primary domain with an element of the secondary one, using co-existence in the same structure.  Each Choice function corresponds to a possible linear ordering.

Procedure START: 
This procedure creates  an encoding of the input $\tau_{\rm EDB}$-structure $\strA$ (say, a graph)
in a sequence of structures in the transition system, 
to mimic an encoding $enc(\strA)$ on a tape of a Turing machine. We use structures to represent cells of the tape of the Turing machine (one $\tau$-structure = one cell).  
The procedure follows a specific path, and thus a specific order  generated by the procedure START.
Subprocedure 
$
{Encode}(vocab(\strA), \dots, S_\sigma, \dots , \bar{P}, \dots)
$
operates as follows. In every state (= cell), it keeps the input structure $\strA$ itself, and adds the part of the encoding of $\strA$ that belongs to that cell.
The interpretations of $\bP$ over the secondary domain of labels provide cell positions on the simulated input tape. 
Each particular encoding is done for a specific induced order on domain elements, in the path that is being followed. 

In addition to producing an encoding, the procedure START sets the  state   of the Turing machine to be the initial state $Q_0$.  It also sets initial values for the variables used to encode the instructions of the Turing machine. 

Expression START is similar to the first-order formula  $\beta_{\sigma}(\bar{a})$ used by  Gr\"{a}del in his proof of capturing \PTIME using SO-HORN logic on ordered structures \cite{Graedel:1991}.  The main difference is that instead of tuples of domain elements $\bar{a}$ used to refer to the addresses of the cells on a tape, we use tuples $\bP$, also of length $k$. Gr\"{a}del's formula $\beta_{\sigma}(\bar{a})$  for encoding input structures has the  following property: $(\strA,<) \models \beta_{\sigma}(\bar{a}) \ \  \Leftrightarrow\ \  \mbox{ the $\bar{a}$-th symbol of $enc(\strA)$ is $\sigma$.}  $ 
Here, we have:   
$$
\begin{array}{c}
\strA \models Encode(\dots, S_\sigma, \dots , P_1(a_1), \dots, P_k(a_k), \dots )(\ch/\e) \\ \Leftrightarrow\ \  \mbox{ the $P_1(a_1), \dots, P_k(a_k)$-th symbol of $enc(\strA)$ is $\sigma$,}
\end{array} 
$$
where $\ba$ is a tuple of elements of the secondary domain, $\ch$ is a Choice function  that   guesses a linear order on the input domain through an order on  structures (states in the transition system),   starting in the input structure $\strA$. That specific generated  order is used in the encoding of the input structure. Another path produces a different order, and constructs an encoding of the input structure for that order.

Procedure STEP: This procedure encodes the instructions of the deterministic Turing machine. SM-PP modules are well-suited for this purpose. Instead of time and tape positions as arguments of binary predicates as in Fagin's \cite{Fagin:1974} and Gr\"{a}del's \cite{Graedel:1991} proofs, we use coexistence 
in the same structure with $k$-tuples of domain elements, as well as lexicographic successor and predecessor on such tuples.
Polynomial time of the computation is guaranteed because 
time, in the repeat-until part, is tracked with $k$-tuples of domain elements.

Procedure END: It checks if the accepting state of the Turing machine is reached.

We have that, for any \PTIME Turing machine, we can construct term $\alpha_{\rm TM}$ in the logic $\Lo$ such that the answer to 
 $
   	\strA  \models |\alpha_{\rm TM} \rangle  \mT(\e)
$
is non-empty 
if and only if the Turing machine accepts an encoding of $\strA$ for  some specific but arbitrary order of domain elements on its tape. 
\end{proof}
\noindent Combining the theorems, we obtain the following corollary.
\begin{corollary}
	The fragment of logic $\Lo$, where negation applies to atomic modules only,  captures precisely NP with respect to its data complexity.
\end{corollary}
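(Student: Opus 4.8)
The plan is to read the corollary as the conjunction of two inclusions between the collection of fragment-definable classes and the collection of NP-recognizable classes, and to obtain each inclusion from one of the two theorems already established. Write $\Lo^{\exists}$ for the fragment of $\Lo$ in which negation applies to atomic modules only (equivalently, the implicitly existential fragment of Section~\ref{sec:implicit-quantification}). The phrase ``captures precisely NP with respect to data complexity'' unfolds into the statement that, for every isomorphism-closed class $\cK$ of $\tau$-structures, $\cK$ is NP-recognizable if and only if there is a sentence $|\alpha\rangle\mT$ of $\Lo^{\exists}$ whose model class $\Pa$ (in the sense of Definition~\ref{def:comp-problem}) equals $\cK$. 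So the whole task reduces to proving these two implications and checking that they are about the same objects.

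First I would establish the soundness direction (definable $\Rightarrow$ NP). Fix any sentence $|\alpha\rangle\mT$ of $\Lo^{\exists}$. By Definition~\ref{def:comp-problem}, its model class $\Pa$ is exactly the set of structures $\strA$ for which $\strA \models |\alpha\rangle\mT(\ch/\e)$ holds for some Choice function $\ch$, and this class is isomorphism-closed by construction. Theorem~\ref{th:membership-PTIME} states precisely that deciding this membership is in NP as a function of $\strA$ (data complexity), so $\Pa$ is NP-recognizable. The one point to check carefully is that the Choice-function certificate guessed in the proof of Theorem~\ref{th:membership-PTIME} is of polynomial length, which is the restriction imposed in Section~\ref{sec:length-choice-function}; this is what prevents the implicit quantifier alternation from pushing the complexity up the polynomial hierarchy, and it is guaranteed by the syntactic restriction defining $\Lo^{\exists}$ together with the loop-freeness built into the semantics of Maximum Iterate.

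Next I would establish the completeness direction (NP $\Rightarrow$ definable), which is immediate from the preceding (Turing-machine simulation) theorem: for an arbitrary NP-recognizable class $\cK$ it produces a sentence $|\alpha_{\rm TM}\rangle\mT$ of $\Lo^{\exists}$ whose models are exactly $\cK$. Combining the two directions yields a correspondence showing that the fragment-definable classes are exactly the NP-recognizable ones, which is the claim. The main obstacle is not a new computation but a bookkeeping one: verifying that the two theorems genuinely concern the \emph{same} syntactic fragment and the \emph{same} notion of ``model''. Concretely, I would confirm (i) that the restriction ``negation applies to atomic modules only'' used in Theorem~\ref{th:membership-PTIME} coincides with the one used in the simulation theorem; (ii) that ``whose models are exactly $\cK$'' in the simulation theorem is read through Definition~\ref{def:comp-problem}, i.e.\ as equality of the class $\Pa$ with $\cK$; and (iii) that the isomorphism-invariance of NP-recognizability matches the isomorphism-closure built into $\Pa$ --- in particular that the guessed linear order produced by the ORDER procedure is immaterial to membership, since $\cK$ being isomorphism-closed forces the simulated machine to accept under one induced order if and only if it accepts under all of them. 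Once these three alignments are checked, the corollary follows by conjoining the two inclusions.
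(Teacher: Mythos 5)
Your proposal is correct and takes essentially the same route as the paper: the corollary is obtained exactly by conjoining Theorem~\ref{th:membership-PTIME} (NP membership for the restricted fragment) with the Turing-machine simulation theorem (every NP-recognizable class is definable in that same fragment), which is all the paper does when it says ``combining the theorems.'' The extra alignment checks you perform (same fragment, same notion of model via Definition~\ref{def:comp-problem}, isomorphism closure) are sensible diligence that the paper leaves implicit.
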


\section{Related Work}
\label{sec:Related}

\noindent\textbf{Logic of Information Flows} 
The work in the current paper is a continuation of research initiated by the author, who introduced  the Logic of Information Flows (LIF) in  \cite{T:FROCOS:2019}. The goal of introducing LIF was to understand how information propagates, in order to make such propagations efficient.  
A version of LIF, \cite{T:FROCOS:2019},  was published, initially,  in the context of reasoning about modular systems, and was based on classical logic. In subsequent work,  with a group of coauthors,  we studied the notions of input, output and composition \cite{ABSTV:KR:2020}, \cite{LIF-TOCL} in LIF, and applications to data access in database research \cite{flifexfo}. A short description of that work can be found in \cite{T:KR-2020-short-abstract}. A lot of  development in the papers \cite{ABSTV:KR:2020}, \cite{flifexfo}, and beyond, was done in the excellent PhD work by Heba Aamer  \cite{Heba-thesis}. 
	
In parallel with the work on \cite{ABSTV:KR:2020}, \cite{LIF-TOCL}, the author continued working towards the main goal, on studying how  to make information flows efficient.  Defining LIF  based on classical logic, as was done in the early versions, was clearly not sufficient. One of the main observations of the author, very early in the development of LIF, was that it was necessary to  make operations \emph{function-preserving}.
The author conducted an extensive  analysis, going through numerous versions and combinations of algebraic operations, eventually coming up with a minimal, but sufficiently expressive set of the operations presented here.  
 Also, it was crucial to 
introduce  \emph{choice}, to handle atomic non-determinism.

\vspace{1ex}

\noindent\textbf{Choice Operator}   Choice occurs in many high-level descriptions of polynomial-time algorithms, e.g., in Gaussian elimination:  \emph{choose an element and continue.} A big challenge, in our goal of formalizing non-deterministic computations algebraically, in an algebra of partial functions,   was in how to deal with binary relations.  Such relations are not necessarily  functional. To deal with this challenge,  the author invented  history-dependent Choice functions, early in her work on multiple variants of LIF.  The dependence on the history, to the best of our knowledge,  has not been used in defining Choice functions before.
The first use of a  Choice operator $\e$ in logic goes back to Hilbert and Bernays in 1939  \cite{HilbertBernays:1939} for proof-theoretic purposes,  without  giving a model-theoretic semantics. 
Early approaches to Choice, in connection to Finite Model Theory,   
include the work by Arvind and Biswas \cite{ArvindBiswas87}, Gire and Hoang \cite{Gire-Hoang},  Blass and Gurevich \cite{BlassGurevich:Choice} and by  Otto \cite{DBLP:journals/jsyml/Otto00}, among others.   
Richerby and Dawar \cite{ Dawar-Richerby}  survey  and study logics with various forms of choice. 
Outside of Descriptive Complexity, 
Hilbert's $\e$ has been studied extensively by  Soviet logicians Mints, Smirnov and Dragalin in 1970's, 80's and 90's, see \cite{Soloviev17}. The semantics of this operator is still an active research area, see, e.g., \cite{Wirth17-long}. Unlike the earlier approaches, our Choice operator formalizes a \emph{strategy}, i.e., what to do next, given the history, under the given constraint given by a term.

An example of using strategies can be seen in building proofs
 in a Gentzen-style proof system. There, while selecting an element witnessing an existential quantifier, we ensure that the element is ``new'', i.e.,  has not appeared earlier in the proof.

A problem with a set-theoretic Choice operator  is that for first-order (FO) logic, and thus for its fixed-point extensions such as FO(FP), choice-invariance is undecidable \cite{BlassGurevich:Choice}. Therefore, in using FO, there is a danger of obtaining an undecidable syntax, which violates a basic principle behind a well-defined logic.
Choiceless Polynomial Time \cite{BlassGurevichShelahChoicelessPTIME} 
is an attempt to characterize what can be done \emph{without} introducing choice.
But, as a critical step, we
\emph{restrict FO connectives}, similarly to Description logics \cite{DescrLogic2003handbook-long}, 	and in strong connection to modal logics,  that are robustly decidable 
\cite{Vardi-Modal-Robust,Graedel-Modal-Robust}.

\vspace{1ex}

\noindent\textbf{Two-Variable Fragments}
The step towards binary relations in LIF \cite{T:FROCOS:2019}, where we partitioned the variables of atomic symbols into input and outputs, was inspired by our own work on Model Expansion \cite{MT05-long}, with its  before-after-a-computation perspective, and also by bounded-variable fragments of first-order logic.
Such fragments have been shown to have  good algorithmic properties by Vardi 	\cite{Vardi-Bounded-Variable-Queries}. Two-variable fragments have been offered as an initial explanation of the robust decidability of modal logics  
\cite{Vardi-Modal-Robust,Graedel-Modal-Robust}. 
In addition, two-variable fragments have order-invariance \cite{Zeume-long}.
Unfortunately, such fragments of FO are not expressive enough to  encode Turing machines.
To overcome this obstacle, we lifted the algebra to operate on \emph{binary relations on strings of relational structures}.  Such relations, intuitively, encode state transitions.

\vspace{1ex}

\noindent\textbf{Algebras of Binary Relations} 
Such an algebra was first introduced by De Morgan. 
It has been extensively developed by Peirce and then Schr\"{o}der. It was abstracted to relation algebra RA by J\'{o}nsson and Tarski in \cite{Jonsson-Tarski:1952}. For earlier uses of the operations and a historic perspective please see Pratt’s historic and  informative overview paper \cite{Pratt:calc-bin-rel}.
More recently, relation algebras were studied by Fletcher,  Van den Bussche,
Surinx and their collaborators in a series of paper, see, e.g. \cite{SVV17,FGLSVVVW}.
The algebras of relations consider various subsets of operations on binary relations as primitive, and other as derivable. 
Our algebra is an algebra of binary relations, similar to J\'{o}nsson and Tarski \cite{Jonsson-Tarski:1952}, however our relations are on  more complex entities  -- \emph{strings} of relational structures. 
Moreover, our binary relations are  \emph{functional}, which is achieved by making all operations function-preserving, and introducing Choice functions.

\vspace{1ex}

\noindent\textbf{Algebras of Functions} 
In another direction, Jackson and Stokes and then McLean \cite{JacksonStokes:2011,McLean17}   study partial functions and their algebraic equational axiomatizations.  The work of Jackson and Stokes \cite{JacksonStokes:2011} is particularly relevant, because it introduces some connectives we use.
However, they  do not study algebras on strings and Choice functions. Also, we had to eliminate intersection, which they, and many other researchers, use. We had to do it  because intersection wastes computational power due to confluence, and makes  information flows less efficient. We came up with an example where an intersection of (the representations of) two NP-complete problems produce a problem in  \PTIME.  We have selected a minimal set of operations, for our purposes. The operations correspond to dynamic  and function-preserving versions of conjunction, disjunction, negation and iteration.  However, other operations, such as many of those from  McLean \cite{McLean17}, can be studied as well.

\vspace{1ex}

\noindent\textbf{Restricting Connectives} 
Our algebraic operations are a
restriction of  first-order connectives, similarly to restrictions of such connectives in  Description logics \cite{DescrLogic2003handbook-long}. Classical connectives, such as negation, disjunction, and also the the iterator in the form of the Kleene star (reflexive transitive closure) are incompatible with an algebraic setting of \emph{functions}. We require the connectives to be function-preserving. We traced the origins of the operations we use as follows. 
The  Unary Negation  operation is from Hollenberg and Visser \cite{HollenbergVisser:99}.  It is also studied, among other operations on functions,  by McLean in, e.g.,  \cite{McLean17} in the form of the Antidomain operation.
Restricting negation (full complementation) to its unary counterpart is already known to imply good computational properties \cite{unary-negation-long}.\footnote{Unary negation is related to the negation of modal logics. It is the modal negation in the modal Dynamic Logic we discuss later. In general, modal logics are  known to be robustly decidable \cite{Vardi-Modal-Robust,Graedel-Modal-Robust} due to a combination of properties.} However, in our case, \emph{all} connectives  and the fixed point construct of first-order with least fixed point,  FO(LFP), had to be restricted.
The operations of Preferential Union and Maximum Iterate are from  Jackson and Stokes \cite{JacksonStokes:2011}. While these algebraic operations are more restrictive than those of Regular Expressions,   
they define the main constructs of Imperative programming.

\vspace{1ex}

\noindent\textbf{Database Query Languages and Comprehension Syntax} 
Early functional languages for databases include
Macchiavelli \cite{DBLP:conf/sigmod/OhoriBT89} and 
Kleisli  \cite{DBLP:journals/jfp/Wong00} 
The languages are based on the 	Comprehension Syntax proposed in \cite{DBLP:journals/sigmod/BunemanLSTW94}, see also \cite{DBLP:journals/tcs/BunemanNTW95} for a later development. 
In somewhat different line of research, Libkin proposed to use 
a polynomial space iterator, among other constructs, for querying databases with incomplete information \cite{DBLP:conf/dbpl/Libkin95}.

\vspace{1ex}

\noindent\textbf{Substitution Monoid and Connection to Kleene Algebra}
 First, we explain a connection to a Kleene algebra on a meta-level, and then discuss the differences in the languages.
The algebra forms a monoid with respect to term substitution.  The proof of this statement is lengthy and is outside of the scope of this paper. However, as for every monoid, there is a well-known and natural connection to a Kleene algebra.

\vspace{1ex}

Let $M$ be a monoid with identity element $\1$ and let $A$ be the set of all subsets of $M$. For two such subsets $S$ and $T$, let $S + T\ : = \  S \cup \ T$, and let 
$
ST \ : = \ \{st : s \in S\text { and } t \in T\}.
$
We define $ S^*$  as the submonoid of $M$ generated by $S$, which can be described as 
$
S^* \ : = \ \{\1\} \cup S  \cup  SS  \cup  SSS  \cup  ... 
$
Then $A$ forms a Kleene algebra with 0 being the empty set and 1 being $\{\1\}$.

\vspace{2ex} 

\noindent\textbf{Comparison to Kleene Algebra at the Level of Algebraic Languages}
The connection to a Kleene algebra we have just explained exists at the meta-level, when we consider term substitution as the monoid operation. But, what is the connection to Kleene algebra  at the level of the algebra itself, i.e., its algebraic operations? One difference is that, unlike the operations of Union ($\cup$) and Iteration ($^*$) of Regular Expressions,  the  operations of Preferential Union ($\sqcup$) and Maximum Iterate ($^\iter$) are \emph{function-preserving}. 

\vspace{1ex} 

\noindent\textbf{Temporal and Dynamic Logics} 
The algebra (\ref{eq:algebra})  has an alternative (and equivalent) syntax in the form of a Dynamic logic, that we explain shortly. Our Dynamic logic is fundamentally different from Propositional Dynamic Logic (PDL)  \cite{DBLP:conf/focs/Pratt76,DBLP:journals/jcss/FischerL79} in that, because of the Choice function semantics, it has a linear time (as opposite to branching time) semantics. This is similar to  to Linear Temporal Logic (LTL) and Linear Dynamic logic on finite traces  LDL$_f$   \cite{DeGiacomo-Vardi:IJCAI:013}.\footnote{LDL$_f$ has the same syntax as PDL, but is interpreted over traces.}
But, in addition,  branching from nodes is used for complex tests. So, there is some similarity to CTL$^*$. To the best of our knowledge, the data complexity of temporal and dynamic logics, with respect to an input database, has never been studied. 

\vspace{1ex} 

\noindent\textbf{Algebra vs Logic} 
Our algebra-Dynamic-logic connection is reminiscent that of  Kleene algebras with tests (KAT) \cite{Kozen:KAT:97}; but, unlike KAT that allows for simple tests only, our logic allows for arbitrarily complex nested tests, in general.  

We believe that ours is the first \emph{algebraic} formalization of a \emph{linear} time logic, i.e., a logic interpreted over traces of computation (as opposite to branching time logics with branching at every state).
A crucial step of this formalization is the use of  Choice functions that map strings to strings. We are not aware of any use of such functions in temporal or  Dynamic logics.

\vspace{1ex}

\noindent\textbf{Partial Algebras and Logics} 
Our approach to strong equality is inspired by Partial Horn Logic (PHL) by Palgrem and Vickers \cite{PalmgrenVickers07}.
The logic builds partiality directly into the logic. 
Their logic is, essentially, as in  \cite{Joh02b}, but has a modified substitution axiom. It identifies definedness with self-equality. The axioms of this logic are universal Horn formulae.
A quasi-equational theory in this partial logic has functions but no predicates (other than equality). Axioms are given in sequent form with conjunction of equations entailing an equation. 
We adopted this idea for reasoning about computation. But, in addition to definedness, we introduced undefinedness, that indicates non-existence of a ``yes'' certificate.
The use of partial algebras has a long history, and is surveyed 
in \cite{Burmeister-parial-algebras-intro}.

\section{Conclusion}
\label{sec:Conclusion}

We have defined  a query language in the form of an algebra of partial higher-order functions on strings of relational structures. 
The algebra  has a two-level syntax, where propagations are separated from control. 
The operations of the top level are obtained by taming (dynamic versions of) classical connectives and a fixed-point construct, that is, by making them \emph{function-preserving}. 
A particular example of a logic of the bottom level is  a singleton-set restriction of Monadic Conjuctive Queries that, intuitively, represent non-deterministic conditional assignments.

The algebra has an associated  declarative query language in the form of a Dynamic logic that is \emph{equivalent} to the algebra. In this logic, typical programming constructs, such as while loops and if-then-else, are definable.
In general, the logic can encode any Turing machines.  We have considered a restricted fragment where the length of the computation is limited to a polynomial number of steps, in  the size of the input structure.

Since the logic can implicitly mimic quantification over cetrificates, it can express problems at any level of the Polynomial Time Hierarchy. With  further restriction, where negations are applied to atomic modules only, the logic captures precisely the complexity class NP. 

We give examples expressing  counting properties on unordered structures,  even though the logic does not have a special cardinality construct. 
The logic can also express  reachability type of queries, as well as examples with mixed propagations.

A future step of this research is to understand \emph{under what general  conditions on the terms $\term$ of the logic  $\Lo$}, evaluating the main query 
$\strA \models\  |\term \rangle  \mT(\e)$ 
can be done by simply following one (arbitrary) sequence of atomic choices. 
When such an evaluation is possible, the query is  choice-invariant.

Work is under way on developing a proof system (and a quasi-equational theory)  for syntactic reasoning about strong equalities between terms,  including definedness, $\term(x)\dn$, and undefinedness, $\term(x)\!\notdn$, as  particular cases of strong equalities.

Another interesting direction is to see whether preservation theorems that fail in the case of function-preserving binary relations \cite{Bogaerts-ten-Cate-McLean-Van-den-Bussche-2023} would hold for the function-preserving binary relations \emph{over strings}, as in this paper. 

The proposed query language, to be practical, has to be extended to handle arithmetic and aggregates. It can be done using the framework of Embedded Model Expansion \cite{TM:IJCAI:2009,TT:NonMon30}, and also via defining a semiring semantics.

\section{Acknowledgements}

The author is grateful to Leonid Libkin and to Brett McLean for useful discussions on cardinality and reachability examples, and on algebras of partial functions, respectively. 
Many thanks to Heng Liu, Shahab Tasharrofi and Anurag Sanyal for their help with the figures. 
The author's  research  is supported by the Natural Sciences and Engineering Research Council of Canada (NSERC).
Part of the research presented in this paper was carried out while the author  
participated in the program on Propositional Satisfiability and Beyond of
the Simons Institute for the Theory of Computing during the spring of 2021, and its extended reunion. 
\bibliographystyle{plain}
\bibliography{bibliography}

\section{Appendix: Quasi-Equational Theory}

	In this section, we give a flavour of how the proof theory for our algebra will look like. In the proof system, we will be able to derive the existence and non-existence of a certificate, as explained in Section \ref{sec:Strong-Equalities}.
		First, we include all logical rules of Partial Horn Theory (PHT) \cite{PalmgrenVickers07}, which we are not going to reproduce here. 
			In addition to logical rules, we need a quasi-equational theory of the specific algebraic operations we use.
	
		Let a set $\Modules$ of module symbols be fixed. Let the signature $\sigma$ be the set of function symbols in the algebra (\ref{eq:algebra}), also listed in the table in Section \ref{sec:Algebra}.  
		
		A \emph{Horn theory} in a signature $\sigma$ is a set of Horn sequents over $\sigma$.  In the case where $\sigma$ does not contain any predicate symbols, the theory is called \emph{quasi-equational}.

A novelty of the  quasi-equational theory for our logic is that,  unlike PHT \cite{PalmgrenVickers07}, we compute not only defined terms $\term\dn$, but also undefined terms  $\term\notdn$.
		 Axioms for $t\notdn$  specify when each particular operation is not defined. Informally, $t(x)\notdn$ means that the corresponding program cannot be executed at $x$.

	\vspace{3ex} 
	
	Our specification of  the quasi-equational theory for signature $\sigma$ given in the table in Section \ref{sec:Algebra} is as follows. Variables that are universally quantified are listed over the turnstiles. Some turnstiles are bidirectional, meaning that implications hold in both directions.

\vspace{1ex}

\noindent\textbf{Identity:}
	\begin{align} 
		\top &\provesxx \id (x) = x 
	\end{align}

\vspace{1ex}

\noindent\textbf{Unary Negation:}
\begin{align} 
	y(x) \notdn   &\ \doubleprovesxy \ \rneg y(x) = x 
\end{align}
\begin{align}
		y(x) \notdn   &\ \doubleprovesxy \ \rneg y(x) \dn \\
	y(x) \dn   &\ \doubleprovesxy \ \rneg y(x) \notdn
\end{align}

\vspace{1ex}

	\noindent\textbf{Preferential Union:}
	\begin{align}  \label{eq:lneg}
	 y(x) =w  & \provesxyzw [y \sqcup z] (x) = w\\  
y(x) \notdn    \land \   z(x) =w & \provesxyzw  [y \sqcup z] (x) = w
	\end{align}
	\begin{align}
y(x) \notdn    \land \   z(x) \notdn   & \doubleprovesxyzw  [y \sqcup z] (x)\notdn \  \land \ [z \sqcup y] (x) \notdn 
	\end{align}

\vspace{1ex}

	\noindent\textbf{Maximum Iterate:}
	\begin{align} 
  y(x) \notdn  \ & \provesxy  y^{\iter}(x)=x\\
	 y^{\iter}(y(x)) \dn \ & \provesxy    y^{\iter}(x) = y^{\iter}(y(x))  
	\end{align}

For all algebraic operations, we need both, axioms for when the operation is defined, and when it is undefined.
If we try to write axioms to derive when Maximum Iterate is undefined, we immediately run into difficulties due to the cyclicity of the reasoning that is needed to prove that such a term is undefined. Stepan Kuznetsov has suggested a method to deal with that case.

\vspace{3ex}

In addition, we assume  axioms  compiled from  SM-PP atomic modules, for each atomic module symbol in $\Modules$.

\label{sec:quasi-equational-theory}

\end{document}